\documentclass{scrartcl}

\usepackage{amsmath, amsthm, amssymb}
\usepackage{hyperref}

\usepackage[numbers]{natbib}
\bibliographystyle{siam}

\usepackage{tikz}
\usetikzlibrary{snakes}
\tikzstyle{labeledNode}=[circle, inner sep = 0.1em, minimum size = 1.2em, draw]
\tikzstyle{normalNode}=[circle, fill=black, draw]
\tikzstyle{dualNode}=[draw]
\tikzstyle{smallNode}=[draw, circle, fill, scale=0.5]
\tikzstyle{smallLabeledNode}=[draw, circle, scale=0.5]
\tikzstyle{normalEdge}=[thick, >=stealth]
\tikzstyle{dualEdge}=[thick, >=stealth, dashed]

\usepackage{bbm}

\theoremstyle{plain}
\newtheorem{theorem}{Theorem}[section]
\newtheorem{corollary}[theorem]{Corollary}
\newtheorem{lemma}[theorem]{Lemma}

\theoremstyle{definition}
\newtheorem{definition}[theorem]{Definition}
\newtheorem{theorem_definition}[theorem]{Theorem and Definition}
\newtheorem{remark}[theorem]{Remark}
\newtheorem*{notation}{Notation}
\newtheorem*{assumption}{Assumption}

\newenvironment{prooflist}{\begin{list}{\labelitemi}{\leftmargin=2em \itemindent=-1em}}{\end{list}}


\newcommand{\vecspan}{\operatorname{span}}

\newcommand{\cyclespace}[1]{\mathcal{S}_{\text{cycle}}(#1)}
\newcommand{\darts}[1]{\overleftrightarrow{#1}}
\newcommand{\setof}{\mathcal{D}}

\newcommand{\arc}[1]{\overrightarrow{#1}}
\newcommand{\antiarc}[1]{\overleftarrow{#1}}
\newcommand{\rev}[1]{\operatorname{rev}(#1)}
\newcommand{\tail}{\operatorname{tail}}
\newcommand{\head}{\operatorname{head}}
\newcommand{\leftf}{\operatorname{left}}
\newcommand{\rightf}{\operatorname{right}}
\newcommand{\bigo}{\mathcal{O}}

\newcommand{\rightof}{\preceq}
\newcommand{\leftof}{\succeq}

\begin{document}
 \title{Lattices and maximum flow algorithms\\ in planar graphs}
 \date{}
 \author{Jannik Matuschke\thanks{Technische Universit\"at Berlin, Institut f\"ur Mathematik, Stra{\ss}e des 17.~Juni 136, 10623 Berlin, Germany. e-mail: {\tt \{matuschke,peis\}@math.tu-berlin.de}. This work was supported by Berlin Mathematical School. An extended abstract has appeared in \href{http://www.springerlink.com/content/r6l436056k1j0351/}{\emph{36th International Conference on Graph-theoretic Concepts in Computer Science, WG 2010}}.} \and Britta Peis\footnotemark[1]}
 \maketitle
 
 \begin{abstract}
  \textbf{Abstract.} We show that the left/right relation on the set of $s$-$t$-paths of a plane graph induces a so-called submodular lattice. If the embedding of the graph is $s$-$t$-planar, this lattice is even consecutive. This implies that Ford and Fulkerson's uppermost path algorithm for maximum flow in such graphs is indeed a special case of a two-phase greedy algorithm on lattice polyhedra. We also show that the properties submodularity and consecutivity cannot be achieved simultaneously by any partial order on the paths if the graph is planar but not $s$-$t$-planar, thus providing a characterization of this class of graphs.
 \end{abstract}
 
\section{Introduction and preliminaries}

The special case of flows in planar graphs has always played a significant role in network flow theory. The predecessor of Ford and Fulkerson's well-known path augmenting algorithm -- and actually the first combinatorial flow algorithm at all -- was a special version for $s$-$t$-planar networks, i.e., those networks where $s$ and $t$ can be embedded adjacent to the infinite face~\cite{FF56}. The basic idea of this \emph{uppermost path algorithm} is to iteratively augment flow along the ``uppermost'' non-saturated $s$-$t$-path in the planar embedding of the network. In 2006, Borradaile and Klein~\cite{BK06} established an intuitive generalization of this algorithm to arbitrary planar graphs, which relies on a partial order on the set of $s$-$t$-paths in the graph, called the \emph{left/right relation}.

For the special case of $s$-$t$-planar graphs, Hassin~\cite{Ha81} showed that a maximum flow can be computed by a single shortest path computation in the dual. Combining this with the linear time shortest path algorithm in planar graphs of Henzinger et al.~\cite{HK97} yields a linear time algorithm for maximum flow in $s$-$t$-planar graphs. Recently, Erickson~\cite{Er10} showed that also the algorithm of Borradaile and Klein corresponds to a sequence of parametric shortest path computations in the dual graph, providing a simplified analysis of this algorithm. More details on the history of maximum flow in planar graphs can be found in Borradaile and Klein's article \cite{BK06}.

Another area of combinatorial optimization, which has so far been unrelated to planar flow computations, is the optimization on lattice structures. In 1978, Hoffman and Schwartz introduced the notion of \emph{lattice polyhedra}~\cite{HS78}, a generalization of Edmond's polymatroids based on lattices, and proved total dual integrality of the corresponding inequality systems if certain additional properties hold. Later, several variants of two-phase greedy algorithms were developed, e.g., by Kornblum~\cite{Ko78}, Frank~\cite{Fr99}, and Faigle and Peis~\cite{FP08}, to solve quite general classes of linear programs on these polyhedra efficiently.

\paragraph{Our results} In this paper, we connect these two fields of research by showing that the left/right relation induces a lattice on the set of simple $s$-$t$-paths in a planar graph. If the network is $s$-$t$-planar, this lattice fulfills the two main properties required in Hoffman and Schwartz' framework, called \emph{submodularity} and \emph{consecutivity}. Our result implies that the uppermost path algorithm of Ford and Fulkerson is a special case of the two-phase greedy algorithm on lattice polyhedra, which, even more, can solve a variant of the flow problem with supermodular and monotone weights on the paths.
However, the case of general planar graphs, i.e., not necessarily $s$-$t$-planar graphs turns out to be much more involved. In fact, we will characterize $s$-$t$-planar graphs as the only class of planar graphs that can be equipped with a lattice on the set of paths that is consecutive and submodular at the same time.

\paragraph{Outline} In the remainder of this section we will define lattice polyhedra (Subsection \ref{sub_latticepoly}) and introduce the basic notions of graph structures (Subsection \ref{sub_graphs}) and the left/right relation (Subsection \ref{sub_leftright}) we need to present our results. 
We then will discuss the left/right relation in $s$-$t$-planar graphs and provide an intuitive characterization for the relation in this class of graphs, which leads to the insight that the relation induces a submodular and consecutive lattice on such graphs (Section \ref{sec_stplanar}). In Section \ref{sec_general}, we discuss the case of general planar graphs and outline a considerably more involved proof to show that the left/right relation induces a submodular lattice in the general case. Finally, in Section \ref{sec_characterization} we show that consecutivity and submodularity cannot be achieved at the same time by any partial order in the non-$s$-$t$-planar case.

\subsection{Lattice polyhedra}
\label{sub_latticepoly}

Our interest in lattices is motivated by a two-phase greedy algorithm that can solve a primal/dual pair of very general linear programming problems on so-called lattice polyhedra, which have first been introduced by Hoffman and Schwartz \cite{HS78}. More precisely, we are given a finite set $E$, a set system $\mathcal{L} \subseteq 2^{E}$, and two vectors $c \in \mathbb{R}^{E}$, $r \in \mathbb{R}^{\mathcal{L}}$, and consider the covering problem

\[(C)\quad \min\ \left\{\sum_{e \in E} c(e) x(e) ~:~ x \in \mathbb{R}^{E}_{+},~ \sum_{e \in S} x(e) \geq r(S) ~ \forall S \in \mathcal{L}\right\}\]
	and its dual, the packing problem
	\[(P)\quad \max\ \left\{\sum_{S \in \mathcal{L}} r(S) y(S) ~:~ y \in \mathbb{R}^{\mathcal{L}}_{+},~ \sum_{S \in \mathcal{L} : e \in S} y(S) \leq c(e) ~ \forall e \in E\right\}.\]
	
Observe that the packing problem $(P)$ corresponds to an ordinary max flow problem if $\mathcal{L}$ is the set of $s$-$t$-paths of a given network and $r \equiv 1$. Before we can state Hoffman and Schwartz' main result, we need to introduce some definitions.

\begin{definition}[Lattices, submodularity, consecutivity]
\label{def_lattice}
	Let $E$ be a finite set, $\mathcal{L} \subseteq 2^{E}$ and $\preceq$ be a partial order on $\mathcal{L}$. The pair $(\mathcal{L}, \preceq)$ is a \emph{lattice} if for all $S, T \in \mathcal{L}$ the following conditions are fulfilled.
	\begin{itemize}
	 \item $\{L \in \mathcal{L} : L \preceq S, L \preceq T \}$ has a unique maximum element $S \wedge T$, called \emph{meet}.
	 \item $\{U \in \mathcal{L} : U \succeq S, U \succeq T \}$ has a unique minimum element $S \vee T$, called \emph{join}.
	\end{itemize}
	A function $r : \mathcal{L} \rightarrow \mathbb{R}$ is \emph{submodular} if $r(S \wedge T) + r(S \vee T) \leq r(S) + r(T)$ for all $S, T \in \mathcal{L}$. It is \emph{supermodular}, if $r(S \wedge T) + r(S \vee T) \geq r(S) + r(T)$ for all $S, T \in \mathcal{L}$.\\ A lattice $\mathcal{L}$ is \emph{submodular}, if $(S \wedge T) \cap (S \vee T) \subseteq S \cap T \text{ and } (S \wedge T) \cup (S \vee T) \subseteq S \cup T$ for all $S, T \in \mathcal{L}$.\footnote{Submodularity of lattices is connected to submodularity of functions in the following way: A lattice is sub\-modular if and only if all functions of the type $f(S) := \sum_{e \in S} x(e)$ for some vector $x \in \mathbb{R}_{+}^{E}$ are submodular.}  It is \emph{consecutive} if $S \cap U \subseteq T$ for all $S, T, U \in \mathcal{L}$ with $S \preceq T \preceq U$.
\end{definition}

Hoffman and Schwartz showed that the inequality system defining $(C)$ is totally dual integral if $\mathcal{L}$ is a submodular and consecutive lattice and $r$ is supermodular w.r.t.\ the lattice. In this case, the correspondig polyhedron is called \emph{lattice polyhedron}. If $r$ is furthermore monotone and both the lattice and $r$ are polynomially computable in the sense that the maximum element of any restricted sublattice of $\mathcal{L}$ can be found, then there exists a polynomial time algorithm \cite{FP08}.

\subsection{Graphs}
\label{sub_graphs}

Our results in Sections \ref{sec_stplanar} and \ref{sec_general} will be valid for both directed and undirected graphs.
We will assume that we are given a directed graph $G = (V, E)$ (if the graph is undirected, we can direct it arbitrarily), but we will allow paths to use all edges in arbitrary direction.\footnote{This helps to streamline the proofs, the resulting lattice can be restricted to directed paths later on by removing all paths that use backward darts. Note that removing elements from the ground set preserves submodular lattice structures.} For this purpose it will be convenient to equip every edge $e \in E$ with two antiparallel \emph{darts}, a forward dart $\arc{e} := (e, 1)$ pointing in the same direction as the edge and a backward dart $\antiarc{e} := (e, -1)$ pointing in the opposite direction. 

\begin{definition}
 We define $\darts{E} := E \times \{1, -1\}$ to be the set of all darts. For a dart $(e, i)$ we use $\rev{(e, i)} := (e, -i)$ to refer to its reverse. For $e = (v, w)$, we let $\tail(\arc{e}) = v = \head(\antiarc{e})$ and $\head(\arc{e}) = w = \tail(\antiarc{e})$. For $D \subseteq \darts{E}$, we define $E(D) := \{e \in E : \arc{e} \in D \text{ or } \antiarc{e} \in D\}$. We use $G[\tilde{E}]$ to refer to the subgraph that only contains the edges $\tilde{E} \subseteq E$.
\end{definition}

The basic notions of paths, cycles, and cuts are defined in the natural way except that all of these objects consist of darts rather than edges. 

\begin{definition}[Walk, path, cycle]
	An \emph{$x$-$y$-walk} is a non-empty sequence of darts $d_{1}, \ldots, d_{k}$ such that $\head(d_{i}) = \tail(d_{i+1})$ for $i \in \{1, \ldots, k - 1\}$ and $x = \tail(d_{1})$ and $y = \head(d_{k})$. If for all darts of an $x$-$y$-walk the underlying edges are pairwise distinct, then the walk is called \emph{$x$-$y$-path} for $x \neq y$ or \emph{cycle} if $x = y$. A path or cycle is called \emph{simple} if the heads of all its darts are pairwise distinct.
\end{definition}

\begin{definition}[Cut]
\label{def_cut}
	A \emph{cut} is a non-empty set of darts $C \subseteq \darts{E}$ such that there is a set of vertices $S \subset V$ with $C = \Gamma^{+}(S) := \{d \in \darts{E} : \tail(d) \in S,~\head(d) \notin S\}$. The cut $C$ is \emph{simple}, if $S$ and $V \setminus S$ are the connected components of $G[E \setminus E(C)]$.
\end{definition}

A \emph{planar graph} is a graph that can be drawn on (or embedded in) the plane without any two edges intersecting. A graph together with such an embedding is called \emph{plane graph}. The embedding partitions the plane into regions that are bordered by the edges. These regions are called \emph{faces} and can be used to define the dual graph $G^{\ast}$ as follows. The vertex set $V^{\ast}$ of $G^{\ast}$ is the set of all faces. For every edge in $G$, we introduce a corresponding edge in $G^{\ast}$ that connects the faces that are separated by this primal edge, going from right to left. We refer to the faces left and right of a dart $d \in \darts{E}$ by $\leftf(d)$ and $\rightf(d)$, respectively. The face surrounding the drawing is called the \emph{infinite face} $f_{\infty}$.\footnote{Note  that the infinite face can be chosen abritrarily.} We will furthermore make use of the following property of planar graphs. 

\begin{theorem}[Cycle/cut duality]
 $C \subseteq \darts{E}$ is a simple cycle in $G$ if and only if the corresponding dual darts comprise a simple cut in $G^{\ast}$.
\end{theorem}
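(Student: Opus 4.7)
The plan is to prove both directions using the Jordan curve theorem in conjunction with the fact that duality is an involution, so that the ``only if'' direction can be recovered from the ``if'' direction by passing to $G^{\ast}$.

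For the direction from simple cycle to simple cut, let $C = d_{1}, \ldots, d_{k}$ be a simple cycle in $G$. Drawn in the plane, the underlying curve is a simple closed Jordan curve, and so its complement decomposes into two connected, open regions: a bounded \emph{interior} $R_{\text{in}}$ and an unbounded \emph{exterior} $R_{\text{ex}}$. Since no edge of $G$ crosses the curve of $C$, every face of $G$ that does not belong to $C$ lies entirely in one of the two regions; let $S^{\ast} \subseteq V^{\ast}$ be the set of faces lying in $R_{\text{in}}$. I would then argue, by walking along $C$ and comparing with the right-to-left orientation convention for dual darts, that precisely for each $d_{i} \in C$ the faces $\leftf(d_{i})$ and $\rightf(d_{i})$ lie on opposite sides of $C$, and moreover that the dual darts $d_{i}^{\ast}$ all point the same way across $C$ (say, with their tails in $S^{\ast}$). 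This gives $\{d^{\ast} : d \in C\} = \Gamma^{+}(S^{\ast})$, so the dual darts form a cut. To see that this cut is simple, it remains to show that $S^{\ast}$ and $V^{\ast} \setminus S^{\ast}$ are both connected in $G^{\ast}[E^{\ast} \setminus E(C^{\ast})]$; this follows because $R_{\text{in}}$ and $R_{\text{ex}}$ are each arcwise connected in the plane, and any two faces in the same region can be connected by a curve avoiding $C$, which can then be homotoped into the $1$-skeleton of $G^{\ast}$ using only dual edges whose primal edges are not in $C$.

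For the reverse direction, I would invoke the well-known fact that planar duality is an involution, i.e., $(G^{\ast})^{\ast}$ is canonically identified with $G$, with the orientation of darts preserved by applying ``right to left'' twice. Then if $C^{\ast} \subseteq \darts{E^{\ast}}$ is a simple cut in $G^{\ast}$, the role of cycles and cuts swaps when we apply the already-proven direction to $G^{\ast}$: I would either argue directly using Jordan-curve reasoning on the cut (the primal edges underlying a simple cut form a simple closed curve separating $S$ from $V \setminus S$ in the plane), or, more cleanly, state that cycle/cut duality in one direction in an arbitrary plane graph implies the other direction in its dual, hence in $G$ itself.

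The main obstacle I foresee is bookkeeping of orientations: the statement refers to darts rather than edges, and the theorem requires that the dual darts of a cycle form a \emph{cut} (which has a coherent orientation across $S$) rather than merely a cocycle. The care lies in showing that the orientation of the cycle -- together with the fixed ``right to left'' rule for dual edges -- induces a consistent orientation of all dual darts across the Jordan curve of $C$, so that they indeed leave the same side $S^{\ast}$. Once this orientation bookkeeping is settled, the topological content (connectivity of the two sides, bijection of darts to dual darts) is routine.
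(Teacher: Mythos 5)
The paper does not prove this theorem at all: it is stated as a known background property of plane graphs (``We will furthermore make use of the following property of planar graphs''), so there is no in-paper argument to compare against. Your sketch is the standard Jordan-curve proof and its forward direction is sound, including the orientation bookkeeping you correctly identify as the delicate point (the consistent ``right to left'' rule does orient all dual darts out of the interior face set $S^{\ast}$, matching the paper's Definition~\ref{def_cut} of a cut as $\Gamma^{+}(S)$). One caveat on your converse: the involution shortcut as you state it does not quite work. Applying the proven direction to $G^{\ast}$ yields ``simple cycle in $G^{\ast}$ implies simple cut in $(G^{\ast})^{\ast}=G$'', which is a different implication from the one you need, namely ``simple cut in $G^{\ast}$ implies simple cycle in $G$''; passing to the dual swaps the roles of cycle and cut but does not convert an implication into its converse. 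So you must actually carry out the direct argument you mention in passing -- that the primal edges dual to a simple cut trace a single simple closed curve, which relies essentially on \emph{both} sides of the cut being connected (this is where simplicity of the cut, in the sense that $S$ and $V\setminus S$ are the two connected components of $G[E\setminus E(C)]$, is used; for a non-simple cut the dual edge set is a disjoint union of cycles rather than one simple cycle). With that route made explicit, the sketch is complete and consistent with how the result is classically established.
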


\subsection{The left/right relation}
\label{sub_leftright}

\begin{assumption}
For the rest of this paper, let $G = (V, E)$ be a connected, planar graph, $s, t \in V$, with an embedding such that $t$ is adjacent to $f_{\infty}$. Furthermore, let  $\mathcal{P}$ be the set of all simple $s$-$t$-paths in $G$.
\end{assumption}

In order to define a partial order on $\mathcal{P}$, we consider the vector space that is spanned by the edges of the graph and the subspace spanned by all cycles. It is well-known that the (clockwise) boundaries of the non-infinite faces comprise a basis of this cycle space.

\begin{definition}
 For a path or cycle $P \subset \darts{E}$ we define the vector $\delta_{P} \in \mathbb{R}^{E}$ by $\delta_{P}(e) := 1$ if $\arc{e} \in P$,  $\delta_{P}(e) := -1$ if $\antiarc{e} \in P$ and $\delta_{P}(e) := 0$ otherwise. For a face $f \in V^{\ast}$ we define $\delta_{f}$ to be the vector corresponding to the set of darts in the clockwise boundary of $f$ (with antiparallel darts canceling out). For a vector $\delta \in \mathbb{R}^{E}$, we let $\delta(\arc{e}) := \delta(e)$, $\delta(\antiarc{e}) := -\delta(e)$ and define the set of darts induced by $\delta$ to be $\setof(\delta) := \{d \in \darts{E} : \delta(d) > 0\}$.
\end{definition}

\begin{theorem_definition}
\label{def_facepotentials} 
 The subspace $\cyclespace{G} := \vecspan \{\delta_{C} : C \text{ is a cycle in } G\}$ is called \emph{cycle space}. Its elements are called \emph{circulations}. The set $\{\delta_{f} : f \in V^{\ast} \setminus \{f_{\infty}\}\}$ is a basis of $\cyclespace{G}$. In particular, there is a unique linear mapping $\Phi: \cyclespace{G} \rightarrow \mathbb{R}^{V^{\ast}}$, such that $\Phi(\delta)(f_{\infty}) = 0$ and $\delta = \sum_{f \in V^{\ast}} \Phi(\delta)(f)\delta_{f}$ for all $\delta \in \cyclespace{G}$. The vector $\Phi(\delta)$ is called the \emph{face potential} of $\delta$.
\end{theorem_definition}

\begin{remark} 
 If $\phi = \Phi(\delta)$ for $\delta \in \cyclespace{G}$, then $\delta(d) = \phi(\rightf(d)) - \phi(\leftf(d))$ for all $d \in \darts{E}$.
\end{remark}

The left/right relation goes back to ideas of Khuller et al.~\cite{Kh93} and Weihe~\cite{We97}, and was specified for paths  Klein~\cite{Kl05}. It yields useful applications for shortest path and maximum flow computations in planar graphs (cf. \cite{Kl05} and \cite{BK06}, respectively) and is based on the face potentials introduced above. Intuitively, the definition states that $P \preceq Q$ if and only if the circulation consisting of $P$ and the reverse of $Q$ is clockwise (as positive face potentials correspond to clockwise circulations).

\begin{definition}[Left/right relation]
\label{def_leftrightrelation}
 Let $P, Q \in \mathcal{P}$. If $\Phi(\delta_{P} - \delta_{Q}) \geq 0$, we say that $P$ \emph{is left of} $Q$ and write $P \leftof Q$. If  $\Phi(\delta_{P} - \delta_{Q}) \leq 0$, we say that $P$ \emph{is right of} $Q$ and write $P \rightof Q$.
\end{definition}

Observe that, by flow conservation, $\delta_{P} - \delta_{Q}$ is a circulation and thus the above relation is well-defined. It is easy to verify that it indeed is a partial order on $\mathcal{P}$. 

\begin{remark}
\label{rem_facepotentials}
 When analyzing a circulation and its face potential, we can ignore edges that are not in the support of the circulation. The potential is equal on both sides of such an edge, so removing it from the graph -- which is contracting it in the dual graph -- yields a subgraph, on which we basically can apply the same face potential. In particular, $P \rightof Q$ in $G$ if and only if $P \rightof Q$ in every subgraph of $G$ containing $P$ and $Q$.
\end{remark}

 \section{Uppermost paths and the path lattice of an $s$-$t$-plane graph}
\label{sec_stplanar}

Intuitively speaking, the uppermost path of an $s$-$t$-plane graph, is the $s$-$t$-path forming its ``upper'' boundary in a drawing where $s$ is on the very left and $t$ is on the very right of the drawing. The idea goes back to Ford and Fulkerson, who used it to introduce the uppermost path algorithm for the maximum flow problem in $s$-$t$-planar graphs \cite{FF56}, which iteratively saturates the uppermost residual path. We will give a definition of the uppermost path in combinatorial terms and use it to characterize the left/right relation in $s$-$t$-plane graphs. This yields all the desired lattice properties of the partial order and thus shows that the uppermost path algorithm corresponds to the two-phase greedy algorithm, which also saturates the maximum (w.r.t.\ $\preceq$) ``residual'' element of the lattice in each iteration \cite{FP08}.

\begin{assumption}
 Throughout Section \ref{sec_stplanar}, we assume that the embedding of $G$ is $s$-$t$-planar.
\end{assumption}

\begin{theorem_definition}[Uppermost and lowermost path]
\label{thm_uppermost_path_existence}
 There is a unique path $U \in \mathcal{P}$ such that $\leftf(d) = f_\infty$ for all $d \in U$. It is called \emph{uppermost path} of $G$. There also is a unique path $L \in \mathcal{P}$ such that $\rightf(d) = f_\infty$ for all $d \in L$. This path is called \emph{lowermost path} of $G$.
\end{theorem_definition}

Clearly, if $U$ is the uppermost path of $G$, then $U$ is also the uppermost path of any subgraph of $G$ containing all edges of $U$. The following lemma goes back to Ford and Fulkerson \cite{FF56}, who proved it by geometric reasoning. We give an alternative proof using cycle/cut duality, and conclude two further auxiliary results that we will need later. 

\begin{lemma}
\label{lem_uppermostpath_cut}
 Let $C$ be a simple $s$-$t$-cut. There is exactly one dart $d_{l} \in C$ with $\leftf(d_{l}) = f_{\infty}$ and exactly one dart $d_{r} \in C$ with $\rightf(d_{r}) = f_\infty$.
\end{lemma}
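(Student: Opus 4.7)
The plan is to invoke cycle/cut duality and reduce the claim to simple counting in the dual. By that theorem, the simple cut $C$ in $G$ corresponds to a simple cycle $C^{*}$ in $G^{*}$, whose darts are precisely the dual darts of the darts of $C$; recall that a dual dart runs from $\rightf(d)$ to $\leftf(d)$, so a primal dart $d \in C$ with $\leftf(d) = f_{\infty}$ corresponds to a dart of $C^{*}$ with head $f_{\infty}$, and $\rightf(d) = f_{\infty}$ corresponds to tail $f_{\infty}$. Uniqueness is then immediate: in a simple cycle all heads of darts are pairwise distinct and so are all tails, so at most one dart of $C^{*}$ has $f_{\infty}$ as head and at most one has $f_{\infty}$ as tail, giving at most one $d_{l}$ and at most one $d_{r}$.

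For existence, I would work on the primal side using the boundary walk of $f_{\infty}$, traversed with $f_{\infty}$ on the left so that every dart $d$ of the walk satisfies $\leftf(d) = f_{\infty}$. By the $s$-$t$-planarity assumption both $s$ and $t$ lie on the boundary of $f_{\infty}$, so they appear on this walk, which can therefore be split into a sub-walk from $s$ to $t$ and one from $t$ to $s$. Since $s \in S$ and $t \notin S$, the first sub-walk must contain a dart $d_{l}$ with $\tail(d_{l}) \in S$ and $\head(d_{l}) \notin S$; that is, $d_{l} \in C$ with $\leftf(d_{l}) = f_{\infty}$. The second sub-walk symmetrically yields a dart $d$ with $\tail(d) \notin S$ and $\head(d) \in S$, and then $d_{r} := \rev{d}$ lies in $C$ and satisfies $\rightf(d_{r}) = f_{\infty}$, using that reversing a dart swaps its left and right faces.

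The hard part will be formalizing this boundary-walk argument cleanly: the boundary of $f_{\infty}$ need not be a simple cycle (it may revisit vertices through cut vertices or bridges incident to the infinite face), and $s$ or $t$ may appear on it more than once, so some care is needed to argue that any closed walk around $f_{\infty}$ visiting both $s$ and $t$ is forced to contain sign changes in both directions across the partition $(S, V \setminus S)$. Once that is granted, everything reduces to the dual bookkeeping above.
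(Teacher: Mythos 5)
Your proof is correct. The uniqueness half is exactly the paper's argument: pass to the dual via cycle/cut duality and observe that in a simple cycle each vertex, in particular $f_{\infty}$, occurs at most once as a head and once as a tail. For existence you take a slightly different route. The paper simply notes that the uppermost path $U$ and the lowermost path $L$ (Theorem and Definition \ref{thm_uppermost_path_existence}) are $s$-$t$-paths and hence each must contain a dart of the $s$-$t$-cut $\Gamma^{+}(S)$; a dart of $U$ in $C$ serves as $d_{l}$ and a dart of $L$ in $C$ serves as $d_{r}$. You instead re-derive existence from scratch by splitting the boundary walk of $f_{\infty}$ at occurrences of $s$ and $t$ and applying a discrete intermediate-value argument to each sub-walk; this is sound (and the reversal step $\rightf(\rev{d}) = \leftf(d)$ is handled correctly), but it essentially inlines the paper's own proof of Theorem and Definition \ref{thm_uppermost_path_existence}, which extracts the uppermost and lowermost paths from that very boundary walk. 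The formalization worry you flag at the end (the boundary walk not being simple, $s$ or $t$ appearing several times) is real but harmless for your argument, since a walk from a vertex of $S$ to a vertex of $V \setminus S$ always contains a dart of $\Gamma^{+}(S)$ regardless of repetitions; it disappears entirely if you follow the paper and cite the already-established simple paths $U$ and $L$ instead.
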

\begin{proof}
 As the uppermost and lowermost path both cross $C$, two darts $d_{r}$ and $d_{l}$ with the desired properties exist. By cycle/cut duality, $C$ is a simple cycle in $G^{\ast}$. In particular, there is only one occurrence of $f_{\infty}$ as a head and one occurrence as a tail of a dart in this simple cycle. This implies the uniqueness of $d_{r}$ and $d_{l}$.\qed
\end{proof}

\begin{lemma}[Orientation lemma]
 \label{lem_no_reverse}\index{orientation lemma}
 Let $U$ be the uppermost path of $G$ and $L$ be the lowermost path of $G$. If $d \in U$ then $\rev{d} \notin L$.
\end{lemma}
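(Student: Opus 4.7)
The plan is to argue by contradiction: suppose $d \in U$ and $\rev d \in L$, and let $e$ denote the underlying edge with $v = \tail(d)$ and $w = \head(d)$. Using that $U$ and $L$ are simple, one quickly checks that $v, w \notin \{s, t\}$: for example, $v = s$ would force $\rev d$ to be a dart of $L$ entering $s$, contradicting that $L$ starts at $s$ and never returns, and the three other cases are analogous. In particular, $U$ has a dart $d'$ entering $v$ immediately before $d$, and $L$ has a dart $d''$ leaving $v$ immediately after $\rev d$.

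The core step is a local claim at $v$: namely, that $\rev{d'} = d''$, so that the edge $e'$ underlying $d'$ is used by $U$ and $L$ in opposite directions, exactly as is $e$. Both $\rev{d'}$ and $d''$ are outgoing darts from $v$ satisfying $\rightf(\rev{d'}) = \leftf(d') = f_\infty$ and $\rightf(d'') = f_\infty$, and one also has $\leftf(d) = f_\infty$. Reading off the cyclic order of darts around $v$ in the planar embedding, $\rev{d'}$ is pinned to be the outgoing dart of $v$ immediately counterclockwise of $d$ (separated from $d$ by $f_\infty$), and $d''$ is pinned similarly. When $f_\infty$ occurs only once in the rotation at $v$---in particular whenever $v$ is not an articulation point---these two identifications coincide, yielding $\rev{d'} = d''$. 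The delicate case is an articulation point $v$ at which $f_\infty$ appears several times: if $v$ separates $s$ from $t$, then $\rev d \in L$ forces $L$ to enter $v$ from the $t$-side and then leave again, contradicting simplicity; otherwise, $s$ and $t$ lie in a common block at $v$, and simplicity of $U$ and $L$ confines them both to that block's edges at $v$, whereupon the generic analysis within the block gives $\rev{d'} = d''$.

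Once the local claim is established, iterating it backward along $U$ produces a chain of edges $e = e^{(0)}, e^{(1)}, e^{(2)}, \ldots$, each shared by $U$ and $L$ in opposite directions, with $e^{(i+1)}$ the edge of $U$ immediately preceding $e^{(i)}$. Since $U$ has finitely many darts, the chain eventually reaches $U$'s very first dart $\arc{e^*}$, whose tail is $s$. But the chain property then places $\antiarc{e^*}$ in $L$, a dart with $\head(\antiarc{e^*}) = s$, contradicting the simplicity of $L$, which starts at $s$ and never revisits it.

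The main obstacle is the cyclic-order analysis at $v$ needed to establish $\rev{d'} = d''$; in particular, one must handle articulation points carefully, either by reducing the problem to a block in which $v$ is not articulation (the generic case) or by deriving the more immediate contradiction via the $s$-$t$-separator argument. Once this local step is nailed down, the chain argument finishes the proof in a few lines.
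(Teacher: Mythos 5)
Your proposal is correct in outline but takes a genuinely different route from the paper's. The paper's argument is global and only a few lines long: it takes a simple $s$-$t$-cut $\Gamma^{+}(S)$ containing $d$ and observes that $L$ -- which starts in $S$, uses $\rev{d}$ to re-enter $S$, and ends in $V\setminus S$ -- must contain at least two forward darts of that cut; since every dart of $L$ has $f_{\infty}$ on its right, this contradicts the uniqueness statement of Lemma \ref{lem_uppermostpath_cut}, which was just established via cycle/cut duality. You instead run a local induction on the rotation system: the identity $\rev{d'}=d''$ at $\tail(d)$, propagated backwards along $U$ until the reversal of $U$'s first dart forces $L$ to revisit $s$. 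Both strategies are sound; the paper's buys brevity and reuses an already-proved lemma, while yours is self-contained and makes the local geometry at each vertex explicit.

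The one place where your write-up falls short of a proof is the articulation-point case. The phrase ``the generic analysis within the block gives $\rev{d'}=d''$'' is not yet an argument: the faces of a block, or of $G[H\cup\{v\}]$ for $H$ the component of $G-v$ containing $s$ and $t$, are not the faces of $G$, so the fact that $f_{\infty}$ occurs only once as a corner at a non-cut-vertex cannot be invoked verbatim after restricting. The step can be repaired: every occurrence of $f_{\infty}$ as a corner of $G$ at $v$ lies inside the unique corner of the outer face of $G[H\cup\{v\}]$ at $v$ (unique because $v$ is not a cut vertex of that connected subgraph), and exactly one of these occurrences is bounded counterclockwise by an $H$-dart, namely the rotation successor in $G[H\cup\{v\}]$ of the clockwise bounding $H$-dart; since simplicity confines both $\rev{d'}$ and $d''$ to $H$-darts, they coincide. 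So your local claim is salvageable, but as written this step is a sketch, and it is exactly the delicate bookkeeping that the paper's cut-counting argument sidesteps.
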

\begin{proof}
 Assume by contradiction there is a dart $d$ in $U$ with $\rev{d} \in L$. Let $\Gamma^{+}(S)$ be a simple $s$-$t$-cut containing $d$. As $L$ starts at $s \in S$, it must cross the cut once before it uses $\rev{d}$ to go back from $V \setminus S$ to $S$, and cross it a second time before it ends at $t \in V \setminus S$, a contradiction to Lemma \ref{lem_uppermostpath_cut}.\qed
\end{proof}

\begin{lemma}[Bridge lemma]
 \label{lem_bridge_lemma}
 Let $d \in \darts{E}$. If $\leftf(d) = \rightf(d)$, then $d$ is either contained in all simple $s$-$t$-paths or in none.
\end{lemma}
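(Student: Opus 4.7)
My plan is to first argue that the hypothesis $\leftf(d) = \rightf(d)$ forces the underlying edge $e$ of $d$ to be a bridge of $G$, and then read off the alternative from the topology of the resulting decomposition of $V$.

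For the first step, I would apply cycle/cut duality. Set $f := \leftf(d) = \rightf(d)$. The dual edge corresponding to $e$ then has both endpoints equal to $f$, i.e.\ it is a self-loop at $f$ in $G^{\ast}$. A self-loop (traversed once) is a simple cycle in $G^{\ast}$, so by the cycle/cut duality theorem stated earlier, the single dart $d$ (equivalently $\rev{d}$) forms a simple cut of $G$. In particular, $G[E \setminus \{e\}]$ has exactly two connected components; call them $V_{1} \ni \tail(d)$ and $V_{2} \ni \head(d)$.

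For the second step I would case split on where $s$ and $t$ lie with respect to this decomposition. If $s$ and $t$ are in the same component, then no simple $s$-$t$-path can use $e$: such a path would have to traverse the unique bridge $e$ and then return, but it may not reuse $e$. Hence $d \notin P$ for every $P \in \mathcal{P}$. If $s$ and $t$ lie in different components, every simple $s$-$t$-path must cross the bridge exactly once, and the direction in which the crossing happens is dictated by which side contains $s$: if $s \in V_{1}$ then every $P \in \mathcal{P}$ contains $d$, while if $s \in V_{2}$ then every $P \in \mathcal{P}$ contains $\rev{d}$ and hence contains $d$ in none of them. In either subcase the conclusion of the lemma holds.

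I do not expect any genuine obstacle in this proof; the work is almost entirely in correctly invoking cycle/cut duality for a one-dart object and in being careful about the distinction between the oriented dart $d$ and the underlying undirected bridge $e$. The only subtlety worth double-checking is that a single self-loop indeed qualifies as a simple cycle under the definitions given in Subsection~\ref{sub_graphs} (which it does, trivially, since it has a single dart whose head is unique).
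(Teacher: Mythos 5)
Your proof is correct and follows essentially the same route as the paper's: interpret $d$ as a self-loop in the dual, invoke cycle/cut duality to get a one-dart simple cut, and observe that a simple path can cross such a cut at most once, so either the cut separates $s$ from $t$ (and every path uses the same dart of the bridge) or it does not (and no path uses it). Your version merely makes explicit the case distinction on the location of $s$ and $t$ that the paper leaves implicit.
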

\begin{proof}
 Since $d$ is a loop in the dual graph, it forms a one-dart simple cut in $G$. If a simple path crosses the cut, it cannot go back. So $P$ starts on one side of the cut and ends on the other, implying that the cut separates $s$ from $t$. Thus, every $s$-$t$-path has to use $d$.\qed
\end{proof}

This suffices to characterize the left/right relation in an $s$-$t$-plane graph in terms of the uppermost path property.

\begin{theorem}
 Let, $P, Q \in \mathcal{P}$. Then the following statements are equivalent.
 \begin{enumerate}
  \item $P$ is the uppermost path in $G[E(P \cup Q)]$.
  \item $Q$ is the lowermost path in $G[E(P \cup Q)]$.
  \item $P$ is left of $Q$.
 \end{enumerate}
\end{theorem}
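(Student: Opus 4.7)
The plan is to work throughout in the subgraph $H := G[E(P \cup Q)]$, which is $s$-$t$-plane and within which the relation $P \leftof Q$ is equivalent to the corresponding relation in $G$ by Remark \ref{rem_facepotentials}. Let $U$ and $L$ denote the uppermost and lowermost paths of $H$, which exist by Theorem and Definition \ref{thm_uppermost_path_existence}. I will prove $(1) \Rightarrow (2)$, then $(1) \Rightarrow (3)$, and finally $(3) \Rightarrow (1)$; the remaining implication $(2) \Rightarrow (1)$ follows by the symmetric argument, interchanging the roles of top/bottom and of $P$/$Q$.

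For $(1) \Rightarrow (2)$, assume $P = U$, so $E(L) \subseteq E(H) = E(U) \cup E(Q)$. Any edge in $E(U) \cap E(L)$ must, by the Orientation Lemma, be used in the same direction by $U$ and $L$; both of its adjacent faces are therefore $f_\infty$, and the Bridge Lemma implies that it belongs to every $s$-$t$-path, in particular to $Q$. Hence $E(L) \subseteq E(Q)$, and a short induction along $L$ starting at $s$ (using simplicity of $Q$) shows that $Q$ must use the darts of $L$ in their natural order, giving $Q = L$.

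For $(1) \Rightarrow (3)$, I would combine $(1) \Rightarrow (2)$ with the algebraic identity
\[
 \delta_P - \delta_Q \;=\; \delta_U - \delta_L \;=\; -\delta_{f_\infty} \;=\; \sum_{f \in V^\ast \setminus \{f_\infty\}} \delta_f,
\]
whose second equality holds because the clockwise boundary of $f_\infty$ is traced by $L$ from $s$ to $t$ followed by the reverse of $U$ (with each bridge dart contributing twice and cancelling out), and whose third equality follows from $\sum_{f \in V^\ast} \delta_f = 0$. Theorem and Definition \ref{def_facepotentials} then yields $\Phi(\delta_P - \delta_Q)(f) = 1$ for every $f \neq f_\infty$, so $P \leftof Q$.

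The heart of the argument is $(3) \Rightarrow (1)$. Set $\phi := \Phi(\delta_P - \delta_Q) \geq 0$, $\psi := \Phi(\delta_U - \delta_P)$, and $\psi' := \Phi(\delta_U - \delta_Q) = \psi + \phi$. Applying $(1) \Rightarrow (2)$ and then the identity above inside the subgraphs $G[E(U \cup P)]$ and $G[E(U \cup Q)]$---in both of which $U$ is still uppermost because removing edges only enlarges $f_\infty$---shows that $\psi$ and $\psi'$ take only values in $\{0,1\}$, and $\phi \geq 0$ becomes $\psi \leq \psi'$ pointwise. For any dart $d \in U$, $\leftf(d) = f_\infty$ gives
\[
 \delta_U(d) - \delta_P(d) \;=\; \psi(\rightf(d)) \;\in\; \{0,1\},
\]
so $\delta_P(d) \in \{0,1\}$. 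If $\delta_P(d) = 0$, the underlying edge of $d$ lies in $E(Q) \setminus E(P)$, hence $\delta_Q(d) \neq 0$; but $\psi(\rightf(d)) = 1$ forces $\psi'(\rightf(d)) = 1$, which via the analogous formula for $\psi'$ gives $\delta_Q(d) = 0$, a contradiction. Therefore $\delta_P(d) = 1$ for every $d \in U$, and the simplicity-based induction from $(1) \Rightarrow (2)$ yields $P = U$. The essential difficulty here is establishing the $\{0,1\}$-valuedness of $\psi$ and $\psi'$; without it, the dart-by-dart contradiction would not close.
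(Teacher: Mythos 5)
Your argument is correct, but it follows a genuinely different route from the paper's in two of the three implications. The step $(1)\Rightarrow(2)$ is essentially the paper's own argument (the orientation lemma forces common edges of $U$ and $L$ to be bridges, the bridge lemma puts them on $Q$, and simplicity finishes). For the implication into $(3)$, the paper argues locally: for each face $f$ it picks a dart $d$ with $\rightf(d)=f$ and reads off $\phi(f)=\phi(f_\infty)+(\delta_P-\delta_Q)(d)\geq 0$ from a four-way case distinction; you instead prove the exact global identity $\delta_U-\delta_L=-\delta_{f_\infty}$, i.e.\ $\phi\equiv 1$ off $f_\infty$. That is a strictly stronger and rather pleasant structural fact, though your one-line geometric justification (the boundary walk of $f_\infty$ is $L\circ\rev{U}$) deserves the same dart-by-dart verification the paper performs elsewhere; it does go through, using the orientation and bridge lemmas together with the fact that every edge of $G[E(P\cup Q)]$ lies on $U$ or on $L$. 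Where you pay for the detour is $(3)\Rightarrow(1)$: the paper's proof is three lines (for $d\in U$ the orientation lemma gives $d\in P$ or $d\in Q$, and $d\in Q\setminus P$ would force $\phi(\rightf(d))<0$), whereas you re-derive the needed fact $\delta_P(d)\in\{0,1\}$ through the auxiliary potentials $\psi$, $\psi'$ and their $\{0,1\}$-valuedness. That machinery is sound --- $U$ remains uppermost in $G[E(U\cup P)]$ and $G[E(U\cup Q)]$, so your identity applies there and transfers back via Remark \ref{rem_facepotentials} --- but the ``essential difficulty'' you identify evaporates if you invoke the strengthened orientation lemma recorded in the footnote to $(1)\Leftrightarrow(2)$, which is already available to you at that point.
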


\begin{proof}
  We can assume that $\phi := \Phi(\delta_{P} - \delta_{Q})$ is a potential in $G[E(P \cup Q)]$ by Remark \ref{rem_facepotentials}.
 \begin{prooflist}
  \item[$(1) \Leftrightarrow (2):$] Suppose $P$ is the uppermost path of $G[E(P \cup Q)]$. Let $L$ be the lowermost path of $G[E(P \cup Q)]$. Let $d \in L$. If $d$ belongs to an edge of $E(P)$, the orientation lemma ensures that $d \in P$. Consequently, $d \in P \cap L$, and thus $\leftf_{G[E(P \cup Q)]}(d) = f_{\infty} = \rightf_{G[E(P \cup Q)]}(d)$, implying $d \in Q$ by the bridge lemma. Hence $E(L) \subseteq E(Q)$, and as the two paths are simple, they are equal. The converse follows by symmetry.\footnote{This equivalency implies that no path $P$ can use a reverse dart of the uppermost path $U$, as $P$ is the lowermost path in $G[E(U \cup P)]$. From now on, we will implicitly use this stronger result when referring to the orientation lemma.\index{orientation lemma}}
  \item[$(2) \Rightarrow (3):$] Suppose $Q$ is the lowermost path of $G[E(P \cup Q)]$ and thus $P$ is its uppermost path. Let $f$ be a face of $G[E(P \cup Q])$ and let $d \in \darts{E}$ with $\rightf(d) = f$. If $\rev{d} \in P$ or $d \in Q$, then $f = \rightf(d) = f_{\infty}$, implying $\phi(f) = 0$. Otherwise, $d \in P$ or $\rev{d} \in Q$, implying $\leftf(d) = f_{\infty}$ and $\phi(f) = \phi(f_{\infty}) + \delta_{P}(d) - \delta_{Q}(d) \geq 0$. Thus, $\phi \geq 0$.
  \item[$(3) \Rightarrow (1):$] Suppose $\phi \geq 0$. Let $U$ be the uppermost path of $G[E(P \cup Q)]$ and $d \in U$. By the orientation lemma, $d \in P$ or $d \in Q$. But $d \in Q \setminus P$ is not possible, as $\delta_{P}(d) - \delta_{Q}(d) = \phi(\rightf(d)) - \phi(f_{\infty}) \geq 0$. So $d \in P$ for all $d \in U$, i.e., $U = P$.\qed
 \end{prooflist}
\end{proof}

Thus, the left/right order in $s$-$t$-plane graphs is in fact an uppermost/lowermost path order. Before we can show that this indeed yields a lattice, we need a final auxiliary result, which states that we can add a path to a subgraph without changing its uppermost path, as long as there already is a path above the path we add.

\begin{lemma}
 \label{lem_add_a_path}
 Let $\bar{E} \subseteq E$ be an edge set, such that $G[\bar{E}]$ is connected and let $P \in \mathcal{P}$ be an $s$-$t$-path with $E(P) \subseteq \bar{E}$. Let $Q \in \mathcal{P}$. If $P$ is left of $Q$, then the uppermost path of $G[\bar{E} \cup E(Q)]$ and the uppermost path of $G[\bar{E}]$ are equal.
\end{lemma}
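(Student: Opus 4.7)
I let $U$ denote the uppermost path of $H := G[\bar{E}]$ and $U'$ the uppermost path of $\tilde{G} := G[\bar{E} \cup E(Q)]$, and reduce the lemma to the claim $E(U') \subseteq \bar{E}$. Since $E(U) \subseteq \bar{E} \subseteq \bar{E} \cup E(Q)$, $U$ is an $s$-$t$-path of $\tilde{G}$, so by the uppermost property of $U'$, $U'$ is left of $U$. Once $E(U') \subseteq \bar{E}$ is known, $U'$ will also lie in $H$, the uppermost property of $U$ will give that $U$ is left of $U'$, and antisymmetry of the partial order will force $U = U'$.

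To prove $E(U') \subseteq \bar{E}$, I argue by contradiction. Suppose there is a dart $d \in U'$ with $e(d) \in E(Q) \setminus \bar{E}$. The strengthened orientation lemma (from the footnote to the preceding theorem), applied in $\tilde{G}$ to the uppermost path $U'$ and to $Q \in \mathcal{P}_{\tilde{G}}$, yields $\rev{d} \notin Q$; combined with $e(d) \in E(Q)$, this forces $d \in Q$. Moreover, since $E(P) \subseteq \bar{E}$ and $e(d) \notin \bar{E}$, neither $d$ nor $\rev{d}$ belongs to $P$; in particular $\delta_P(d) = 0$, while $d \in Q$ gives $\delta_Q(d) = 1$.

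The contradiction is extracted inside $G' := G[E(U') \cup E(P) \cup E(Q)] \subseteq \tilde{G}$. Passing from $\tilde{G}$ to the subgraph $G'$ only merges faces of $\tilde{G}$, and in particular the region immediately to the left of $d$ remains part of the unbounded component, so $\leftf_{G'}(d) = f_\infty^{G'}$; by uniqueness in Theorem and Definition~\ref{thm_uppermost_path_existence}, $U'$ is therefore still the uppermost path of $G'$. On the other hand, the hypothesis that $P$ is left of $Q$ transfers from $G$ to $G'$ by Remark~\ref{rem_facepotentials}, so the face potential $\psi := \Phi_{G'}(\delta_P - \delta_Q)$ is nonnegative and satisfies $\psi(f_\infty^{G'}) = 0$. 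The identity $(\delta_P - \delta_Q)(d) = \psi(\rightf_{G'}(d)) - \psi(\leftf_{G'}(d))$ from the remark after Theorem and Definition~\ref{def_facepotentials} then gives $\psi(\rightf_{G'}(d)) = 0 - 1 = -1$, contradicting $\psi \geq 0$.

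The main obstacle is choosing the auxiliary subgraph: it must be small enough for the uppermost property of $U'$ to persist (so that $\leftf(d) = f_\infty$ is pinned down), yet large enough to contain both $P$ and $Q$ (so that the hypothesis $P$ left of $Q$ can be translated into a nonnegative face potential and evaluated at $d$). The subgraph $G[E(U') \cup E(P) \cup E(Q)]$ is the minimal choice meeting both demands, and it is exactly there that the face potential $\Phi(\delta_P - \delta_Q)$ collides with the uppermost property to produce the $-1$ on a bounded face.
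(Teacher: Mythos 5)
Your proof is correct and follows essentially the same route as the paper's: both argue by contradiction, use the strengthened orientation lemma to identify a dart $d \in U^{\prime} \cap Q$ whose edge lies outside $\bar{E}$, and observe that $\leftf(d) = f_\infty$ persists when passing to a smaller subgraph containing $P$ and $Q$. The only (harmless) divergence is in how the final contradiction is extracted: the paper uses the characterization of the relation ($Q$ is the lowermost path of $G[E(P \cup Q)]$, so $\rightf(d) = f_\infty$ and the bridge lemma forces $d \in P$), whereas you evaluate the face potential $\Phi(\delta_{P} - \delta_{Q})$ at $\rightf(d)$ directly and obtain the value $-1$, contradicting $P$ being left of $Q$.
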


\begin{proof}
 Let $U, U^{\prime}$ be the uppermost paths of $G[\bar{E}]$ and $G[\bar{E} \cup E(Q)]$, respectively and assume by contradiction that $U \neq U^{\prime}$. Then $U^{\prime}$ uses an edge in $E(Q) \setminus \bar{E}$, and by the orientation lemma, it even uses the same dart $d$ of the edge that is used by $Q$. For this dart, $\leftf_{G[\bar{E} \cup E(Q)]}(d) = f_\infty$, and hence $\leftf_{G[E(P \cup Q)]}(d) = f_{\infty}$, as $E(P \cup Q) \subseteq \bar{E} \cup E(Q)$. But as $Q$ is the lowermost path in $G[E(P \cup Q)]$, also $\rightf_{G[E(P \cup Q)]}(d) = f_{\infty}$. Thus, $d \in P$ by the bridge lemma, a contradiction, as $d$ was chosen as a dart not in $G[\bar{E}]$.\qed
\end{proof}

Finally, we can show the existence of a consecutive and submodular path lattice in an $s$-$t$-plane graph. We even get a nice characterization of meet and join of this lattice as the lowermost and uppermost path of $G[E(P \cup Q)]$.

\begin{theorem}
 $(\mathcal{P}, \rightof)$ is a consecutive and submodular lattice with $P \wedge Q$ being the lowermost path in $G[E(P \cup Q)]$ and $P \vee Q$ being the uppermost path in $G[E(P \cup Q)]$.
\end{theorem}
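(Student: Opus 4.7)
My plan is to let $J$ and $M$ be the uppermost and lowermost paths of $G[E(P\cup Q)]$ respectively, and verify in turn that (i) they are the join $P\vee Q$ and meet $P\wedge Q$, (ii) the lattice is submodular, and (iii) it is consecutive. The previous theorem together with Remark~\ref{rem_facepotentials} immediately give $J\leftof P$, $J\leftof Q$ and $M\rightof P$, $M\rightof Q$, so the work lies in the universal properties and the two structural conditions.

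For $J = P\vee Q$ I would take any $U'\in\mathcal{P}$ with $U'\leftof P$ and $U'\leftof Q$ and argue $U'\leftof J$ by chaining Lemma~\ref{lem_add_a_path}. Starting with $\bar E = E(U'\cup P)$, whose uppermost is $U'$ by the theorem, one application (with witness $U'\in\bar E$ and $U'\leftof Q$) yields that the uppermost of $G[E(U'\cup P\cup Q)]$ is still $U'$. Starting instead from $\bar E' = E(U'\cup J)$ and calling its uppermost $\tilde U$, two further applications (adding first $E(P)$ then $E(Q)$, using $J\leftof P$, $J\leftof Q$ and $E(J)\subseteq E(P\cup Q)$ as witnesses) show that the uppermost of $G[E(U'\cup P\cup Q)]$ also equals $\tilde U$. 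Hence $\tilde U=U'$, which by the theorem translates to $U'\leftof J$. The identity $M=P\wedge Q$ is handled symmetrically, using the lowermost counterpart of Lemma~\ref{lem_add_a_path} (whose proof is the same after swapping uppermost with lowermost and left with right).

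For submodularity, the inclusion $(P\wedge Q)\cup(P\vee Q)\subseteq P\cup Q$ as sets of darts is enforced by the strengthened orientation lemma noted in the footnote: a dart $d\in J$ cannot appear in reverse in $P$ or $Q$, so the dart of the underlying edge that actually appears in $P\cup Q$ must be $d$ itself, and likewise for $M$. For the other inclusion, any $d\in M\cap J$ satisfies $\leftf(d)=f_\infty$ (because $d$ lies on the uppermost path $J$) and $\rightf(d)=f_\infty$ (because $d$ lies on the lowermost path $M$) in $G[E(P\cup Q)]$, hence $d$ is a bridge there, and Lemma~\ref{lem_bridge_lemma} forces $d$ onto every simple $s$-$t$-path of this subgraph, in particular onto both $P$ and $Q$.

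Consecutivity is the step I expect to require the most care. Given $S\preceq T\preceq U$ and a dart $d\in S\cap U$, my plan is to pass to $H:=G[E(S\cup T\cup U)]$ and show that $U$ and $S$ remain its uppermost and lowermost paths respectively. Each of these follows from a single application of Lemma~\ref{lem_add_a_path} (respectively its dual): the uppermost of $G[E(S\cup U)]$ is $U$ by the theorem and $U\leftof S$, and adding $E(T)$ does not change it because $U\leftof T$; symmetrically for $S$ being lowermost. In $H$ we then read off $\leftf(d)=f_\infty=\rightf(d)$, so $d$ is a bridge in $H$, and applying Lemma~\ref{lem_bridge_lemma} to the simple $s$-$t$-path $T\subseteq H$ forces $d\in T$. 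The main obstacle here is purely the bookkeeping: checking at each step of the lemma applications that the required edge set is connected and that the witness path lies in the correct subgraph with the correct relative position, rather than producing any new conceptual ingredient.
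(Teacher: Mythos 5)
Your proposal is correct and follows essentially the same route as the paper: meet and join are identified as the lowermost and uppermost paths of $G[E(P\cup Q)]$, the universal property is established by chaining Lemma~\ref{lem_add_a_path}, consecutivity comes from Lemma~\ref{lem_add_a_path} plus the bridge lemma, and submodularity from the (strengthened) orientation lemma together with the bridge-lemma argument. The only cosmetic differences are that the paper closes the join argument by noting the uppermost path of a graph remains uppermost in any subgraph containing its edges (where you instead run two extra applications of Lemma~\ref{lem_add_a_path}), and that the paper derives $(P\wedge Q)\cap(P\vee Q)\subseteq P\cap Q$ from consecutivity rather than repeating the bridge-lemma argument directly.
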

\begin{proof}
 We first show that meet and join can indeed be defined as claimed. Then we deduce consecutivity and submodularity. Let $P, Q, R \in \mathcal{P}$.
 \begin{prooflist}
  \item[Meet and join:] Let $U$ be the uppermost path in $G[E(P \cup Q)]$. Then $P, Q \rightof U$. Let $U^{\prime} \in \mathcal{P}$ with $P, Q \rightof U^{\prime}$. Then $U^{\prime}$ is the uppermost path of $G[E(P \cup U^{\prime}) \cup E(Q)]$ by Lemma \ref{lem_add_a_path}. As $U$ is contained in this graph, $U \rightof U^{\prime}$. Thus, $U$ is the least upper bound on $P$ and $Q$ with respect to $\rightof$. The meet follows by symmetry.
  \item[Consecutivity:] Suppose $P \rightof Q \rightof R$. Then $P$ is the lowermost path and $R$ is the uppermost path of $G^{\prime} := G[E(P \cup R) \cup E(Q)]$ by Lemma \ref{lem_add_a_path}. Thus, $\rightf_{G^{\prime}}(d) = f_{\infty} = \leftf_{G^{\prime}}(d)$ for all $d \in P \cap R$. By the bridge lemma, this implies $d \in Q$.
  \item[Submodularity:] As we have proven consecutivity, it suffices to show $P \wedge Q, P \vee Q \subseteq P \cup Q$. This immediately follows from the definition of $P \wedge Q$ and $P \vee Q$ as lowermost and uppermost path of $G[E(P \cup Q)]$ and the orientation lemma.\qed
 \end{prooflist}
\end{proof}

 \begin{figure}[t]
 \begin{center}
 \begin{tikzpicture}[scale=0.8]
	\node[labeledNode] (s) at (-5, 3) {s};
	\node[labeledNode] (t) at (0, 3) {t};
	\node[normalNode] (1) at (-2.5, 4) {}
		edge[<-, normalEdge] node[above, sloped] {$e_{1}$} (s)
		edge[->, normalEdge] node[above, sloped] {$e_{3}$} (t);
	\node[normalNode] (2) at (-2.5, 2) {}
		edge[<-, normalEdge] node[below, sloped] {$e_{2}$} (s)
		edge[->, normalEdge] node[below, sloped] {$e_{4}$} (t)
		edge[<-, normalEdge] node[left] {$e_{5}$} (1);
		
	\node[draw] (P1) at (3, 4) {$P_{1}$};
	\node[draw] (P2) at (2, 3) {$P_{2}$}
		edge[normalEdge] (P1);
	\node[draw] (P3) at (4, 3) {$P_{3}$}
		edge[normalEdge] (P1);
 \node[draw] (P4) at (3, 2) {$P_{4}$}
		edge[normalEdge] (P2)
		edge[normalEdge] (P3);

	\node at (1.5, 4) {$(\mathcal{P}, \rightof)$};
	
	\node at (-5.3, 4) {$G = (V, E)$};

\draw[gray] (-8.1, 1.3) rectangle (6.7,-0.2);
\draw[gray] (-4.4, 1.3) -- (-4.4, -0.2) (-0.7, 1.3) -- (-0.7, -0.2) (3, 1.3) -- (3, -0.2);
	
	\node[draw] (PP1) at (-7.5, 0.5) {$P_{1}$};
	\draw (PP1) + (0.7, 0) node[smallLabeledNode] (sP1) {};
	\draw (sP1) +(2, 0) node[smallLabeledNode] (tP1) {};
	\draw (sP1) +(1, 0.5) node[smallNode] (1P1) {}
		edge[<-, normalEdge] (sP1)
		edge[->, normalEdge] (tP1);
		
	\draw (PP1) +(3.7, 0) node[draw] (PP2) {$P_{2}$};
	\draw (PP2) + (0.7, 0) node[smallLabeledNode] (sP2) {};
	\draw (sP2) +(2, 0) node[smallLabeledNode] (tP2) {};
	\draw (sP2) +(1, 0.5) node[smallNode] (1P2) {}
		edge[<-, normalEdge] (sP2);
	\draw (sP2) +(1, -0.5) node[smallNode] (2P2) {}
		edge[<-, normalEdge] (1P2)
		edge[->, normalEdge] (tP2);
		
	\draw (PP2) +(3.7, 0) node[draw] (PP3) {$P_{3}$};
	\draw (PP3) + (0.7, 0) node[smallLabeledNode] (sP3) {};
	\draw (sP3) +(2, 0) node[smallLabeledNode] (tP3) {};
	\draw (sP3) +(1, 0.5) node[smallNode] (1P3) {}
		edge[->, normalEdge] (tP3);
	\draw (sP3) +(1, -0.5) node[smallNode] (2P3) {}
		edge[->, normalEdge] (1P3)
		edge[<-, normalEdge] (sP3);
		
	\draw (PP3) +(3.7, 0) node[draw] (PP4) {$P_{4}$};
	\draw (PP4) + (0.7, 0) node[smallLabeledNode] (sP4) {};
	\draw (sP4) +(2, 0) node[smallLabeledNode] (tP4) {};
	\draw (sP4) +(1, -0.5) node[smallNode] (2P4) {}
		edge[<-, normalEdge] (sP4)
		edge[->, normalEdge] (tP4);
\end{tikzpicture}
 \caption[test]{An $s$-$t$-plane graph and the lattice of its simple $s$-$t$-paths. The paths $P_{2}$ and $P_{3}$ are not comparable, but $P_{2} \wedge P_{3} = P_{4}$ and $P_{2} \vee P_{3} = P_{1}$.}
 \label{ex_pathlattice}
 \end{center}
 \end{figure}
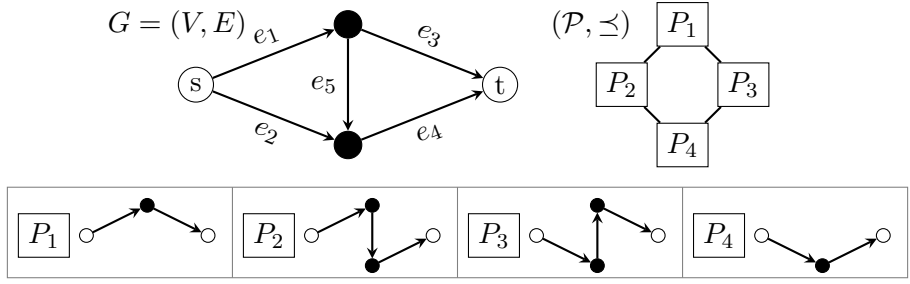

An example of a graph and its path lattice is depicted in Figure \ref{ex_pathlattice}.

Our result implies total dual integrality of the maximum flow problem in $s$-$t$-planar graphs, even when we introduce supermodular weights on the paths. Applying the two-phase greedy algorithm on the lattice yields an implementation of the uppermost path algorithm by Ford and Fulkerson that solves the maximum flow problem in $\bigo(|V| \log(|V|))$, again also for the case of supermodular and monotone increasing path weights (cf. \cite{Ma09} for more details).

\section{The path lattice of a general plane graph}
\label{sec_general}

We will now show that the left/right relation also defines a lattice in general plane graphs. However, the proof will require significantly more effort this time. In contrast to the $s$-$t$-planar case, meet and join of two paths $P, Q \in \mathcal{P}$ are not always the minimum (rightmost) and maximum (leftmost) path in $G[E(P \cup Q)]$ in the general case (cf. Figure \ref{ex_leftrightmeetjoin}
for an example). The intuitive idea for constructing the meet is the following: We subtract the ``positive part'' of the circulation $\delta_{P} - \delta_{Q}$ (i.e., those faces that prevent $P$ from being right of $Q$) from the path $P$. In this way we obtain a set of darts $D^{P \wedge Q} \subseteq P \cup Q$ that contains the meet $P \wedge Q$, as we shall see later. We formalize this idea in the following lemma.

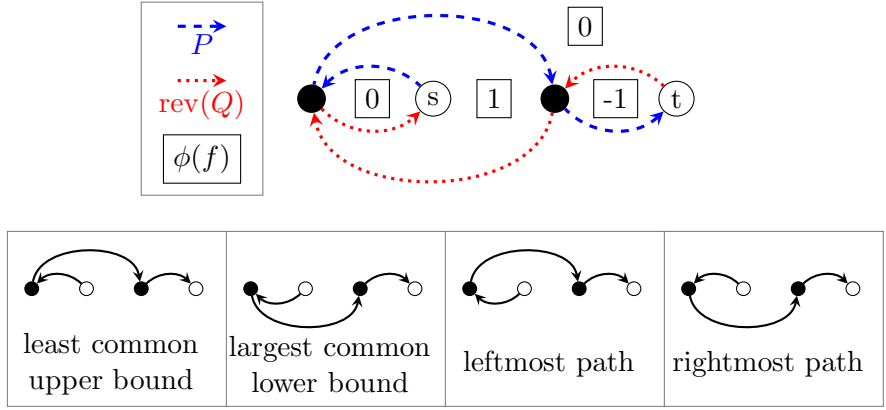
\begin{figure}[t]
\begin{center}
\begin{tikzpicture}[scale=0.8]
	\node[color=blue] at (-9.8, 0.9) {$P$};
	\draw [->, >=stealth, very thick, dashed, color=blue] (-10.2,1.2) to (-9.4,1.2);
	\node[color=red] at (-9.8, -0.1) {$\rev{Q}$};
	\draw [->, >=stealth, very thick, dotted, color=red] (-10.2,0.3) to (-9.4,0.3);
	\node[draw] (phi) at (-9.8,-1) {$\phi(f)$};
	\draw[gray] (-10.8, 1.6) rectangle ++(2, -3.2);
	
	\path (-6, 0) node[labeledNode] (s) {s}
		++(-2, 0) node[normalNode] (1) {}
			edge[<-, >=stealth, very thick, dashed, color=blue, bend left=45] (s)
			edge[->, >=stealth, very thick, dotted, color=red, bend right=45] (s)
		++(4,0) node[normalNode] (2) {}
			edge[<-, >=stealth, very thick, dashed, color=blue, bend right=80] (1)
			edge[->, >=stealth, very thick, dotted, color=red, bend left=80] (1)
		++(2,0) node[labeledNode] (t) {t}
			edge[<-, >=stealth, very thick, dashed, color=blue, bend left=45] (2)
			edge[->, >=stealth, very thick, dotted, color=red, bend right=45] (2)
		++(-1,0) node[dualNode] {-1} +(-0.5,1.2) node[dualNode] {0} 
		++(-2,0) node[dualNode] {1} ++(-2,0) node[dualNode] {0};
		
	\draw[gray] (-13, -2.2) rectangle (1.4, -5.1) (-9.4, -2.2) -- (-9.4, -5.1) (-5.8, -2.2) -- (-5.8, -5.1) (-2.2, -2.2) -- (-2.2, -5.1);	
	
	\path[scale=0.45] (-26, -7) node[smallLabeledNode] (p1s) {}
		++(-2, 0) node[smallNode] (p11) {}
			edge[<-, >=stealth, thick, bend left=45] (p1s)
		++(4,0) node[smallNode] (p12) {}
			edge[<-, >=stealth, thick, bend right=80] (p11)
		++(2,0) node[smallLabeledNode] (p1t) {}
			edge[<-, >=stealth, thick, bend right=45] (p12);
			
	\node[text width=3cm, text centered] at (-11.3, -4.4) {least common upper bound};
		
	\path[scale=0.45] (-18, -7) node[smallLabeledNode] (p2s) {}
		++(-2, 0) node[smallNode] (p21) {}
			edge[<-, >=stealth, thick, bend right=45] (p2s)
		++(4,0) node[smallNode] (p22) {}
			edge[<-, >=stealth, thick, bend left=80] (p21)
		++(2,0) node[smallLabeledNode] (p2t) {}
			edge[<-, >=stealth, thick, bend right=45] (p22);
			
	\node[text width=3cm, text centered] at (-7.7, -4.4) {largest common lower bound};
		
	\path[scale=0.45] (-10, -7) node[smallLabeledNode] (p3s) {}
		++(-2, 0) node[smallNode] (p31) {}
			edge[<-, >=stealth, thick, bend right=45] (p3s)
		++(4,0) node[smallNode] (p32) {}
			edge[<-, >=stealth, thick, bend right=80] (p31)
		++(2,0) node[smallLabeledNode] (p3t) {}
			edge[<-, >=stealth, thick, bend right=45] (p32);
		
	\node[text width=3cm, text centered] at (-4.1, -4.4) {leftmost path};
		
	\path[scale=0.45] (-2, -7) node[smallLabeledNode] (p4s) {}
		++(-2, 0) node[smallNode] (p41) {}
			edge[<-, >=stealth, thick, bend left=45] (p4s)
		++(4,0) node[smallNode] (p42) {}
			edge[<-, >=stealth, thick, bend left=80] (p41)
		++(2,0) node[smallLabeledNode] (p4t) {}
			edge[<-, >=stealth, thick, bend right=45] (p42);
			
	\node[text width=3cm, text centered] at (-0.5, -4.4) {rightmost path};
\end{tikzpicture}
\caption[test]{This example shows that the rightmost path of $G[E(P \cup Q)]$ is not necessarily the largest common lower bound of the two paths in non-$s$-$t$-planar embeddings. However, subtracting the ``positive'' part of the circulation from $P$ yields $P \wedge Q$.}
\label{ex_leftrightmeetjoin}
\end{center}
\end{figure}

\begin{lemma}
\label{lem_meetdefinition}
 Let $P, Q \in \mathcal{P}$ and $\phi := \Phi(\delta_{P} - \delta_{Q})$.
 \begin{itemize}
  \item Let $S^{+} := \{f \in V^{\ast} : \phi(f) > 0\}$ and $\delta^{P \wedge Q} := \delta_{P} - \sum_{f \in S^{+}} \phi(f) \delta_{f}$. Then $\delta^{P \wedge Q} \in \{-1, 0, 1\}^{E}$ and $D^{P \wedge Q} := \setof(\delta^{P \wedge Q}) \subseteq P \cup Q$.
  \item Let $S^{-} := \{f \in V^{\ast} : \phi(f) < 0\}$ and $\delta^{P \vee Q} := \delta_{P} - \sum_{f \in S^{-}} \phi(f) \delta_{f}$. Then $\delta^{P \vee Q} \in \{-1, 0, 1\}^{E}$ and $D^{P \vee Q} := \setof(\delta^{P \vee Q}) \subseteq P \cup Q$. 
 \end{itemize}
\end{lemma}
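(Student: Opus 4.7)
The plan is to unfold the definition of $\delta^{P \wedge Q}$ dart by dart using the face-potential identity $\delta(d) = \phi(\rightf(d)) - \phi(\leftf(d))$ from the remark following Theorem and Definition~\ref{def_facepotentials}, and then run a short case analysis on the signs of $\phi$ on the two faces bordering each dart. The second bullet will then follow by the same argument applied to $\phi^{-} := \min(\phi, 0)$ in place of $\phi^{+} := \max(\phi, 0)$, so I focus on the meet.

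First I would note that $\delta_{P} - \delta_{Q}$ is an integer circulation, hence $\phi$ is integer-valued: one computes $\phi$ inductively along a spanning tree of $G^{\ast}$ rooted at $f_{\infty}$ via the recursion $\phi(\leftf(d)) = \phi(\rightf(d)) - (\delta_{P}(d) - \delta_{Q}(d))$, whose increments are all integers. Moreover, $\sum_{f \in S^{+}} \phi(f) \delta_{f}$ is again a circulation, and since $\Phi(\delta_{f})$ is the indicator of $f$ for every non-infinite face, linearity of $\Phi$ gives that this circulation has face potential exactly $\phi^{+}$ (with $\phi^{+}(f_{\infty}) = 0$). Applying the face-potential identity once more yields, for every dart $d$,
\[ \delta^{P \wedge Q}(d) \;=\; \delta_{P}(d) - \bigl(\phi^{+}(\rightf(d)) - \phi^{+}(\leftf(d))\bigr). \]

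Now I fix a dart $d$, set $\alpha := \phi(\leftf(d))$ and $\beta := \phi(\rightf(d))$, and use $\beta - \alpha = \delta_{P}(d) - \delta_{Q}(d) \in \{-2,-1,0,1,2\}$. A three-way case split on the signs of $\alpha$ and $\beta$ suffices:
\begin{itemize}
 \item If $\alpha, \beta \geq 0$, then $\phi^{+}(\rightf(d)) - \phi^{+}(\leftf(d)) = \beta - \alpha = \delta_{P}(d) - \delta_{Q}(d)$, whence $\delta^{P \wedge Q}(d) = \delta_{Q}(d)$.
 \item If $\alpha, \beta \leq 0$, then $\phi^{+}(\rightf(d)) - \phi^{+}(\leftf(d)) = 0$, whence $\delta^{P \wedge Q}(d) = \delta_{P}(d)$.
 \item Otherwise $\alpha$ and $\beta$ are strictly of opposite signs; integrality combined with $|\beta - \alpha| \leq 2$ then forces $\{\alpha, \beta\} = \{-1, 1\}$, and direct substitution in either subcase yields $\delta^{P \wedge Q}(d) = 0$.
\end{itemize}

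In every case $\delta^{P \wedge Q}(d)$ equals one of $\delta_{P}(d)$, $\delta_{Q}(d)$, or $0$, which proves $\delta^{P \wedge Q} \in \{-1,0,1\}^{E}$; furthermore, whenever $\delta^{P \wedge Q}(d) \neq 0$ the value already equals $\delta_{P}(d)$ or $\delta_{Q}(d)$, so the corresponding dart lies in $P \cup Q$, giving $D^{P \wedge Q} \subseteq P \cup Q$. The analogue for $\delta^{P \vee Q}$ follows verbatim with $\phi^{+}$ replaced by $\phi^{-}$, the roles of $\delta_{P}(d)$ and $\delta_{Q}(d)$ swapping between the first two bullets. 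The only genuinely delicate step is the mixed-sign case: without the integrality of $\phi$ one would only obtain $\delta^{P \wedge Q}(d) \in [-1, 1]$, so I really need the fact that the circulation $\delta_{P} - \delta_{Q}$ is integer-valued in order to pin $\alpha$ and $\beta$ down to $\pm 1$.
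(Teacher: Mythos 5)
Your proof is correct and follows essentially the same route as the paper's: both reduce to a per-dart case analysis on which of the two bordering faces has positive potential, using that $\sum_{f\in S^{+}}\phi(f)\delta_{f}$ contributes exactly $\phi^{+}(\rightf(d))-\phi^{+}(\leftf(d))$ to each dart. Your explicit appeal to integrality of $\phi$ in the mixed-sign case is a point the paper leaves implicit (its case~(2) also silently needs it to pass from $-1<\delta^{P\wedge Q}(d)\leq 1$ to $\delta^{P\wedge Q}(d)\in\{0,1\}$), but otherwise the arguments coincide.
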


It is straightforward to check that if $P \preceq Q$, then $P = D^{P \wedge Q}$ and $Q = D^{P \vee Q}$. Unfortunately, $D^{P \wedge Q}$ and $D^{P \vee Q}$ are not $s$-$t$-paths in general. However, it can be shown that $D^{P \wedge Q}$ and $D^{P \vee Q}$ each consist of a unique simple $s$-$t$-path and some cycles and that these paths are meet and join of $P$ and $Q$, respectively. From now on, we will focus on the set $D^{P \wedge Q}$, but analogous versions of all statements can be shown for $D^{P \vee Q}$. The proof of Lemma \ref{lem_meetdefinition} can be obtained by a simple case distinction, and, as a by-product, leads to the following additional result.

\begin{lemma}
\label{lem_no_crossing}
 If $R \subseteq D^{P \wedge Q}$ is a simple path and $C \subseteq D^{P \wedge Q}$ is a simple cycle such that $R \cap C = \emptyset$, then $R$ does not cross $C$, i.e., all darts of $R$ are either in the interior of $C$ or none of them is.
\end{lemma}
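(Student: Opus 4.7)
The plan is to reduce the claim to a local impossibility at a single vertex and then derive a numeric contradiction from the $\phi^+$-constraints governing membership in $D^{P \wedge Q}$.

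First I would observe that since $\delta^{P \wedge Q} \in \{-1, 0, 1\}^E$, the set $D^{P \wedge Q}$ cannot contain both a dart and its reverse, so together with $R \cap C = \emptyset$ the path $R$ and the cycle $C$ share no underlying edge. As $C$ is a simple cycle in the plane, every edge of $R$ is drawn strictly inside or strictly outside the Jordan curve formed by $C$. Assuming for contradiction that both sides are used by $R$, there exist consecutive darts $d_i, d_{i+1} \in R$ whose edges lie on opposite sides; their common endpoint $v := \head(d_i) = \tail(d_{i+1})$ must therefore lie on $V(C)$. Let $c_1, c_2 \in C$ be the darts of $C$ entering and leaving $v$. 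Since $D^{P \wedge Q} \subseteq P \cup Q$ and each of the simple paths $P, Q$ contributes at most one incoming and one outgoing dart at $v$, the four pairwise distinct darts $d_i, d_{i+1}, c_1, c_2$ exhaust the set of $P \cup Q$-darts at $v$, which we denote $p_1, p_2, q_1, q_2$ (incoming/outgoing of $P$ and of $Q$, respectively).

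Next I would argue that the topological crossing condition at $v$ — that $\{d_i, d_{i+1}\}$ and $\{c_1, c_2\}$ interleave in the cyclic order of darts around $v$ — combined with the incoming/outgoing matching, leaves only two essentially symmetric configurations: either $R$ locally follows $P$ and $C$ follows $Q$ at $v$, or the roles of $P$ and $Q$ are swapped; in both cases the cyclic counterclockwise order of the four darts is $p_1, q_1, p_2, q_2$. Using the $\{-1,0,1\}$-valuedness of $\delta^{P \wedge Q}$, one checks that coincidences such as $p_1 = q_1$ or $p_1 = \rev{q_2}$ are excluded, so $p_1$ is a pure $P$-dart (its underlying edge not in $E(Q)$) and $q_1$ is a pure $Q$-dart.

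Finally, I invoke the case distinction underlying the proof of Lemma \ref{lem_meetdefinition}. A pure $P$-dart $d \in D^{P \wedge Q}$ satisfies $\phi^+(\leftf(d)) = \phi^+(\rightf(d))$; since for such $d$ the identity $\phi(\rightf(d)) - \phi(\leftf(d)) = 1$ also holds, both adjacent $\phi$-values must be $\leq 0$ (otherwise $\phi^+$ would differ by one). A pure $Q$-dart $d \in D^{P \wedge Q}$, in contrast, requires $\phi(\leftf(d)) > 0$. Applied to $p_1$ and $q_1$ in the interleaved configuration, the shared face $g_1 := \rightf(p_1) = \leftf(q_1)$ must simultaneously satisfy $\phi(g_1) \leq 0$ and $\phi(g_1) > 0$, the desired contradiction. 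The mirror configuration (with $P$ and $Q$ exchanged) is handled identically. The main obstacle is the bookkeeping: correctly identifying the face $g_1$ via the $\leftf/\rightf$ conventions around $v$, and checking that degenerate sub-cases where $P$ and $Q$ share a dart at $v$ (or traverse a common edge in opposite directions) either reduce the number of $D^{P \wedge Q}$-darts at $v$ below four and thus preclude a crossing, or are excluded outright by the $\{-1, 0, 1\}$-structure.
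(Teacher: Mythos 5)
Your argument is essentially the paper's: at a crossing vertex you locate a rotationally adjacent solid $P$-dart and solid $Q$-dart and derive contradictory sign constraints on the potential of the wedge face between them, and the sign constraints you extract from the case distinction of Lemma \ref{lem_meetdefinition} are exactly the paper's Corollary \ref{cor_leftpotential}. One small inaccuracy: your dichotomy ``either $R$ locally follows $P$ and $C$ follows $Q$, or the roles are swapped'' omits the mixed case where $R$ enters $v$ on a $P$-dart and leaves on a $Q$-dart (such changes of tracks do occur, cf.\ Lemma \ref{lem_changeoftracks}), but this is harmless, since the two incoming darts $p_1$ and $q_1$ always lie one on $R$ and one on $C$ and hence are adjacent in the interleaved cyclic order around $v$, which is all your final contradiction actually uses.
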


The following lemma is the key insight on our way to proving the desired result. Its proof, however, is rather lengthy and involves many minor details. The key idea is that such a cycle must consist of edges of $P$ and $Q$ and that these paths can only enter or leave the cycle from or to the left, i.e., from or towards its interior, thus being ``trapped'' inside the cycle. This provides us with a contradiction.

\begin{lemma}\label{lem_nocounterclockwise}
 There are no counterclockwise simple cycles in $D^{P \wedge Q}$.
\end{lemma}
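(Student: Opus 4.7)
The plan is to argue by contradiction: suppose a counterclockwise simple cycle $C \subseteq D^{P \wedge Q}$ exists. First observe that $D^{P \wedge Q} \subseteq P \cup Q$ by Lemma~\ref{lem_meetdefinition}, and since $P$ and $Q$ are simple $s$-$t$-paths with $s \neq t$, neither contains a cycle; hence $C$ must use darts from both paths. Moreover, for every $v \in V(C)$ the outgoing $C$-dart at $v$ must be the outgoing $P$- or $Q$-dart at $v$, and likewise for the incoming $C$-dart.

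The technical heart of the argument is the following \emph{trapping claim}: at every vertex $v \in V(C)$, any dart of $P \cup Q$ incident to $v$ but not in $C$ lies on the interior side of $C$ at $v$. To establish it, note that for any $d \in C$ the identity $\delta^{P \wedge Q}(d) = 1$, combined with the remark following Theorem/Definition~\ref{def_facepotentials} applied to $\sum_{f \in S^+}\phi(f)\delta_f$ (whose face potential is $\phi^+$), gives
\[
 \phi^+(\leftf(d)) - \phi^+(\rightf(d)) \;=\; 1 - \delta_P(d) \;\in\; \{0,1,2\}.
\]
Since $C$ is counterclockwise, $\leftf(d)$ is an interior face of $C$ and $\rightf(d)$ is an exterior face. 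A hypothetical dart $d' \in P \cup Q$ incident to $v$ on the exterior side of $C$ would, via analogous face-potential identities for $d'$ and for every non-$C$ edge encountered while walking around $v$ through the exterior sector between the two $C$-edges at $v$, propagate the $\phi^+$-values so as to force a negative value somewhere on an exterior face, contradicting $\phi^+ \geq 0$. The required case distinction has to handle whether each of the two $C$-darts at $v$ is in $P$ only, in $Q$ only, or in both, with analogous cases for $d'$; this is the main obstacle of the proof.

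Granted the trapping claim, no edge of $P$ or $Q$ can lie strictly in the exterior region of $C$ in the planar embedding, since reaching the exterior would require at some $v \in V(C)$ a non-$C$ dart of $P \cup Q$ on the exterior side of $C$. Consequently $V(P) \cup V(Q)$ is contained in $V(C)$ together with the strict interior of $C$; in particular $t$ lies in this closed region. But $t$ is adjacent to $f_\infty$ and $f_\infty$ lies in the exterior of $C$, so $t$ must actually lie on $C$. Then the outgoing $C$-dart at $t$ is in $C \subseteq P \cup Q$ and therefore must be the outgoing $P$- or $Q$-dart at $t$, which is impossible as both paths terminate at $t$ and hence have no outgoing dart there — this is the desired contradiction.
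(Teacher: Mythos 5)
Your overall skeleton (a counterclockwise cycle must mix $P$- and $Q$-darts, the paths get trapped in its interior, hence $t$ would have to lie on $C$ and would need an outgoing path dart) is exactly the intuition the paper states before Lemma~\ref{lem_nocounterclockwise} -- and exactly the intuition the paper warns is ``severely simplified.'' The entire difficulty of the lemma is concentrated in your \emph{trapping claim}, which you do not prove: you derive the correct identity $\phi^{+}(\leftf(d))-\phi^{+}(\rightf(d))=1-\delta_{P}(d)\in\{0,1,2\}$ for $d\in C$ and then assert that an exterior-side dart of $P\cup Q$ at a cycle vertex would ``propagate'' $\phi^{+}$ to a negative value on some exterior face. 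This local propagation argument cannot succeed. The configuration it would have to exclude is precisely the one analysed in the paper's Lemma~\ref{lem_cycleconstellation}: a vertex $w$ where $P$ arrives from the interior along a dart $d''$ and leaves to the exterior, with two solid $Q$-darts $d_{C},d_{C}'$ of the cycle at $w$. There the local constraints merely force $\phi_{l}(d_{C})=\phi_{l}(d_{C}')+1\geq 2$ on an \emph{interior} face, while the adjacent exterior faces get $\phi^{+}$-values such as $\phi_{l}(d_{C})-1\geq 1$; everything stays nonnegative and no contradiction arises at $w$ or in its neighbourhood. So the trapping claim, read as a statement provable vertex-by-vertex from the face potential, is simply not available; under the (contradictory) hypothesis that $C$ exists, a path \emph{can} locally exit to the exterior.

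What actually rules out such an exit is a global argument: the paper first shows (change-of-tracks lemma) that a switch from a $P$-dart to a $Q$-dart on $C$ forces $P$ into the interior, then locates an exit dart forcing $\phi_{l}\geq 2$ somewhere on $C$, and finally counts the $P$-darts entering and leaving a whole segment $C[v,w]$ from the left -- using that $t$ lies outside $C[v,w]\circ\rev{P[v,w]}$ and that $P$ cannot cross itself -- to show the left potential cannot rise from $1$ to $2$ along that segment. Your proposal replaces this counting argument by an unproven local statement, so the proof as written has a genuine gap at its central step; the concluding paragraph (that $t$ on $C$ would need an outgoing dart of a simple $s$-$t$-path) is fine but only becomes usable once the trapping is actually established.
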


\begin{theorem}
\label{thm_leftmostpathlattice}
 $(\mathcal{P}, \preceq)$ is a submodular lattice with $P \wedge Q$ being the unique simple $s$-$t$-path contained in $D^{P \wedge Q}$ and $P \vee Q$ being the unique simple $s$-$t$-path contained in $D^{P \vee Q}$.
\end{theorem}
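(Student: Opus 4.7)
My plan is to extract $P \wedge Q$ from $D^{P \wedge Q}$ via flow decomposition, verify it is a common lower bound of $P$ and $Q$, prove the greatest lower bound property (which automatically yields uniqueness inside $D^{P \wedge Q}$), and then derive lattice submodularity algebraically. The join $P \vee Q$ is handled symmetrically, so I only describe the meet. Since $\delta^{P \wedge Q} = \delta_P - \sum_{f \in S^+} \phi(f) \delta_f$ is $\delta_P$ plus a circulation and has entries in $\{-1, 0, 1\}$ by Lemma~\ref{lem_meetdefinition}, standard flow decomposition yields a simple $s$-$t$-path in $D^{P \wedge Q}$, which I denote $P \wedge Q$, together with a family $\mathcal{C}$ of simple cycles whose darts partition $D^{P \wedge Q} \setminus (P \wedge Q)$:
\[
\delta^{P \wedge Q} = \delta_{P \wedge Q} + \sum_{c \in \mathcal{C}} \delta_c.
\]
By Lemma~\ref{lem_nocounterclockwise} every $c \in \mathcal{C}$ is clockwise and hence has non-negative face potential, supported on its interior.

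Using $\Phi(\delta_f) = \mathbbm{1}_{\{f\}}$ for $f \neq f_\infty$ together with $f_\infty \notin S^+$ (since $\phi(f_\infty) = 0$), I obtain
\[
\Phi(\delta_{P \wedge Q} - \delta_P) = -\phi^+ - \Phi\!\Bigl(\sum_{c \in \mathcal{C}} \delta_c\Bigr) \leq 0,
\]
where $\phi^+ := \max(\phi, 0)$; hence $P \wedge Q \preceq P$. Adding $\phi$ to both sides gives $\Phi(\delta_{P \wedge Q} - \delta_Q) = -\phi^- - \Phi(\sum_c \delta_c) \leq 0$, where $\phi^- := \max(-\phi, 0)$, so $P \wedge Q \preceq Q$ as well.

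For the greatest lower bound, take any $R \in \mathcal{P}$ with $R \preceq P, Q$ and set $\phi_R := \Phi(\delta_P - \delta_R)$; the two hypotheses translate to $\phi_R \geq 0$ and $\phi_R \geq \phi$, hence $\phi_R \geq \phi^+$ pointwise. The conclusion $R \preceq P \wedge Q$ rearranges to
\[
\phi_R(g) \geq \phi^+(g) + \#\bigl\{c \in \mathcal{C} : g \in \mathrm{int}(c)\bigr\} \qquad \text{at every face } g,
\]
which is automatic whenever the right-hand cycle count vanishes. The remaining case --- faces $g$ lying inside one or more clockwise cycles of $\mathcal{C}$ --- is the main obstacle. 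My approach is a dart-by-dart analysis of how $R$ passes through the region bounded by each enclosing $c \in \mathcal{C}$: since $c \subseteq P \cup Q$, Lemma~\ref{lem_no_crossing} should pin down the routing of $R$ enough to force each enclosing $c$ to contribute one extra unit to $\phi_R(g) - \phi^+(g)$. Once this is in place, uniqueness of $P \wedge Q$ inside $D^{P \wedge Q}$ is free: the entire construction applies to every extracted $s$-$t$-path in $D^{P \wedge Q}$, so any two extractions are mutually $\preceq$ each other and hence equal.

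Finally, summing the two definitions in Lemma~\ref{lem_meetdefinition} and using $\sum_{f : \phi(f) \neq 0} \phi(f) \delta_f = \delta_P - \delta_Q$ gives the algebraic identity
\[
\delta^{P \wedge Q} + \delta^{P \vee Q} = \delta_P + \delta_Q,
\]
from which $(P \wedge Q) \cup (P \vee Q) \subseteq P \cup Q$ follows using $D^{P \wedge Q}, D^{P \vee Q} \subseteq P \cup Q$. For the intersection inclusion, any dart $d \in (P \wedge Q) \cap (P \vee Q)$ satisfies $\delta^{P \wedge Q}(d) = \delta^{P \vee Q}(d) = 1$, forcing $\delta_P(d) + \delta_Q(d) = 2$ and hence $d \in P \cap Q$. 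This yields lattice submodularity and completes the proof.
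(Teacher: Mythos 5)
Your skeleton matches the paper's: flow-decompose $\delta^{P \wedge Q}$ into a simple $s$-$t$-path plus simple cycles, invoke Lemma \ref{lem_nocounterclockwise} to make every cycle clockwise, verify the two common-lower-bound inequalities by linearity of $\Phi$, and derive submodularity from the identity $\delta^{P \wedge Q} + \delta^{P \vee Q} = \delta_{P} + \delta_{Q}$ (your intersection argument is in fact slicker than the paper's, which goes through Corollary \ref{cor_leftpotential}). But the greatest-lower-bound step has a genuine gap exactly where you flag ``the main obstacle.'' You need $\phi_{R}(g) \geq \phi^{+}(g) + (\text{number of cycles of } \mathcal{C} \text{ enclosing } g)$ at faces $g$ lying inside cycles of the decomposition, and your plan to get this from Lemma \ref{lem_no_crossing} cannot work as stated: that lemma only constrains paths \emph{contained in} $D^{P \wedge Q}$, whereas your $R$ is an arbitrary common lower bound that may use edges outside $E(P \cup Q)$ entirely, so nothing ``pins down its routing'' relative to the cycles. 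Moreover the inequality must account for nesting -- each enclosing cycle has to contribute a full extra unit -- which a local analysis of $R$ against a single cycle does not obviously deliver.

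The paper sidesteps this case altogether. It proves $\Phi(\delta_{P \wedge Q} - \delta_{R})(\bar{f}) \geq 0$ only at faces $\bar{f}$ incident to the extracted path, where all cycle contributions vanish because the extracted path lies entirely in the exterior of every cycle ($t$ is exterior, and Lemma \ref{lem_no_crossing} forbids crossing -- there the lemma does apply, since both objects sit in $D^{P \wedge Q}$). It then propagates the inequality to an arbitrary face $\hat{f}$ by observing that $\Phi(\delta_{P \wedge Q} - \delta_{R})$ is constant along any dual path avoiding the darts of $P \wedge Q$ and $R$, and that such a dual path from $\hat{f}$ to a face incident to $P \wedge Q$ exists because $R$, containing no primal cycle, contains no dual cut. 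To repair your proposal, replace the dart-by-dart analysis with this dual-connectivity argument; the rest of what you wrote then goes through.
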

\begin{proof} 
 As $\delta^{P \wedge Q} \in \{-1, 0, 1\}^{E}$ is the sum of $\delta_{P}$ and some circulations, $\delta^{P \wedge Q}$ is a unit $s$-$t$-flow of value $1$. Thus, there is a flow decomposition $\delta_{R} + \sum_{i=1}^{k} \delta_{C_{i}} = \delta^{P \wedge Q}$ for a simple $s$-$t$-path $R \subseteq D^{P \wedge Q}$ and some -- by Lemma \ref{lem_nocounterclockwise} clockwise -- simple cycles $C_{1}, \ldots, C_{k} \subseteq D^{P \wedge Q}$ with $E(R), E(C_{1}), \ldots, E(C_{k})$ pairwise disjoint.
 
 We now show $R \preceq P$ and $R \preceq Q$. This is directly implied by the two inequalities stated below, which follow from the linearity of $\Phi$, the equality $\phi = \sum_{f \in S^{+}} \phi(f) \Phi(\delta_{f}) + \sum_{f \in S^{-}} \phi(f) \Phi(\delta_{f})$, and the fact that $\Phi(\delta_{C}) \geq 0$ and $\Phi(\delta_{f}) \geq 0$ for any clockwise simple cycle $C$ and any face $f \in V^{\ast}$.
 
 \begin{eqnarray*}
  \Phi(\delta_{P} - \delta_{R}) & = & \Phi\left(\delta_{P} - \left(\delta^{P \wedge Q} - \sum_{i=1}^{k}\delta_{C_{i}}\right)\right) = \sum_{f \in S^{+}} \phi(f) \Phi(\delta_{f}) + \sum_{i = 1}^{k} \Phi(\delta_{C_{i}}) ~~ \geq ~~ 0\\
  \Phi(\delta_{Q} - \delta_{R}) & = & \Phi\left(\delta_{Q} - \left(\delta^{P \wedge Q} - \sum_{i=1}^{k}\delta_{C_{i}}\right)\right) = -\sum_{f \in S^{-}} \phi(f) \Phi(\delta_{f}) + \sum_{i = 1}^{k} \Phi(\delta_{C_{i}}) ~~ \geq ~~ 0
 \end{eqnarray*}
 
 Now let $S \in \mathcal{P}$ be a path with $S \preceq P$ and $S \preceq Q$. We show $S \preceq R$. First, we consider only faces incident to $R$. So let $\bar{f} \in \{\leftf(d) : d \in R\} {\cup} \{\rightf(d) : d \in R\}$. By Lemma \ref{lem_no_crossing}, $R$ cannot cross any of the cycles $C_{i}$, and, as $t$ is on the exterior of any such cycle, $R$ cannot have any darts in the interior of a cycle. Thus, $\bar{f}$ is not in the interior of any of the cycles $C_{i}$ and $\Phi(\delta_{C_{i}})(\bar{f}) = 0$. This yields
 
 \begin{eqnarray*}
  \Phi(\delta_{R} - \delta_{S})\left(\bar{f}\right) = \Phi\left(\delta_{P} - \sum_{f \in S^{+}} \phi(f) \delta_{f} - \delta_{S} \right)\left(\bar{f}\right) = \begin{cases} \Phi(\delta_{P} - \delta_{S})\left(\bar{f}\right) & \text{ if } \bar{f} \notin S^{+} \\
  \Phi(\delta_{Q} - \delta_{S})\left(\bar{f}\right) & \text{ if } \bar{f} \in S^{+}
  \end{cases} ~~ \geq 0.
 \end{eqnarray*}
 
 Now let $\hat{f} \in V^{\ast}$ be any face. As $S$ does not contain a cycle in the primal graph, it does not contain a cut in the dual graph. Thus, there is a path in the dual that leads from $\hat{f}$ to some face $\bar{f}$ incident to $R$ and does not intersect $S$ or $R$, i.e., the potential does not change along the path. Thus $\Phi(\delta_{R} - \delta_{S})(\hat{f}) = \Phi(\delta_{R} - \delta_{S})(\bar{f}) \geq 0$. Consequently, $S \preceq R$.
 
 We thus have shown that $R$ is the meet of $P$ and $Q$ ($R$ is unique by anti-symmetry). Likewise, we can show that $D^{P \vee Q}$ contains a unique $s$-$t$-path that is the least common upper bound of $P$ and $Q$. Thus, $(\mathcal{P}, \preceq)$ is a lattice with meet and join as described above.\qed
\end{proof}

\section{A characterization of $s$-$t$-planar graphs}
\label{sec_characterization}

It is easy to observe that the path lattice induced by the left/right relation in general planar (but non-$s$-$t$-planar) graphs is not necessarily consecutive (cf. the paths $P_{1}, P_{2}, P_{4}$ in Figure \ref{fig_counterexample}). Of course one might ask whether this property can be achieved by a different partial order on the paths. Indeed, one can show that no partial order in any planar but not $s$-$t$-planar graph can induce a lattice that is submodular and consecutive at the same time.

The key idea to proving this negative result is to show it for two graphs that comprise the $s$-$t$-planar equivalent to the famous Kuratowski graphs $K_{3,3}$ and $K_{5}$. So let $K^{s-t}_{3,3}$ and $K^{s-t}_{5}$ be the graphs that arise from the respective Kuratowski graphs by deleting the edge connecting $s$ and $t$ ($K^{s-t}_{3,3}$ is depicted in Figure \ref{fig_counterexample}).

 \begin{figure}[t]
  \begin{center}
   \begin{tikzpicture}[scale=0.8]
	\node[labeledNode] (s) at (-2, 0) {$s$};
	\node[normalNode] (1) at (0, -1.5) {}
		edge[<-, normalEdge] node[above, sloped] {$e_{2}$} (s);
	\node[normalNode] (2) at (0, 0) {}
		edge[<-, normalEdge] node[above, sloped] {$e_{1}$} (s);
	\node[normalNode] (3) at (2, -1.5) {}
		edge[<-, normalEdge] node[above, sloped] {$e_{5}$} (1)
		edge[<-, normalEdge] node[above, sloped] {$e_{4}$} (2);
	\node[normalNode] (4) at (0, 1.5) {}
		edge[<-, normalEdge] node[right] {$e_{3}$} (2);
	\node[labeledNode] (t) at (4,-1.5) {$t$}
		edge[<-, normalEdge] node[above, sloped] {$e_{7}$} (4)
		edge[<-, normalEdge] node[above, sloped] {$e_{8}$} (3);
	\draw[->, normalEdge] (0, -1.5) arc (300:64:1.8);
	\node at (-2.9, 0) {$e_{6}$};
		
	\path (6.5,0.1) node[draw] (P2) {$P_{1}$};
	\path (P2)++(-1.1,1.2) node[smallLabeledNode] (sP2) {} 
		++(0.8,-0.6) node[smallNode] (vP23) {}
		edge[<-, normalEdge] (sP2)
		++(0,1.2) node[smallNode] (vP24) {}
		++(1.6,-1.2) node[smallLabeledNode] (tP2) {}
		edge[<-, normalEdge] (vP24);
		\draw[->, normalEdge] (vP23) arc (300:64:0.72);
		
	\path (P2)+(3.5,0) node[draw] (P3) {$P_{2}$};
	\path (P3)++(-1.1,1.2) node[smallLabeledNode] (sP3) {} 
		+(0.8,0) node[smallNode] (vP31) {}
		edge[<-, normalEdge] (sP3)
		+(0.8,0.6) node[smallNode] (vP32) {}
		edge[<-, normalEdge] (vP31)
		+(2.4,-0.6) node[smallLabeledNode] (tP3) {}
		edge[<-, normalEdge] (vP32);
		
	\path (P2)+(0,-2.1) node[draw] (P6) {$P_{3}$};
	\path (P6)++(-1.1,1.2) node[smallLabeledNode] (sP6) {} 
		+(0.8,0) node[smallNode] (vP61) {}
		edge[<-, normalEdge] (sP6)
		+(1.6,-0.6) node[smallNode] (vP62) {}
		edge[<-, normalEdge] (vP61)
		+(2.4,-0.6) node[smallLabeledNode] (tP6) {}
		edge[<-, normalEdge] (vP62);
		
	\path (P6)+(3.5,0) node[draw] (P7) {$P_{4}$};
	\path (P7)++(-1.1,1.2) node[smallLabeledNode] (sP7) {} 
		+(0.8,-0.6) node[smallNode] (vP71) {}
		edge[<-, normalEdge] (sP7)
		+(1.6,-0.6) node[smallNode] (vP72) {}
		edge[<-, normalEdge] (vP71)
		+(2.4,-0.6) node[smallLabeledNode] (tP7) {}
		edge[<-, normalEdge] (vP72);
	
\end{tikzpicture}
   \caption[test]{The graph $K^{s-t}_{3,3}$ and four of its $s$-$t$-paths, which suffice for showing that there is no partial order that induces a submodular and consecutive lattice on the set of paths.}
   \label{fig_counterexample}
  \end{center}
 \end{figure}
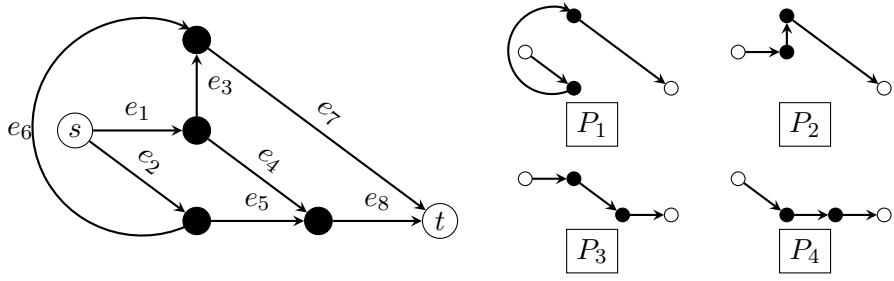

\begin{lemma}
\label{lem_comparability}
 Let $P, Q \in \mathcal{P}$ such that the subgraph $G[E(P \cup Q)]$ contains only the two paths $P$ and $Q$. If $\preceq$ is a partial order that induces a submodular lattice, then $P \preceq Q$ or $P \succeq Q$.
\end{lemma}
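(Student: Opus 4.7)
The plan is to apply the submodularity property of the lattice directly to the pair $P, Q$. Since $\mathcal{P}$ is a submodular lattice under $\preceq$, we have $(P \wedge Q) \cup (P \vee Q) \subseteq P \cup Q$, where the union is taken in the ground set of darts $\darts{E}$. In particular, $P \wedge Q \subseteq P \cup Q$ as a dart set, and therefore $E(P \wedge Q) \subseteq E(P \cup Q)$. Since $P \wedge Q \in \mathcal{P}$, it is a simple $s$-$t$-path, and the previous inclusion shows that it is a simple $s$-$t$-path of the subgraph $G[E(P \cup Q)]$.

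Next I would invoke the hypothesis that $G[E(P \cup Q)]$ contains only the two simple $s$-$t$-paths $P$ and $Q$. This forces $P \wedge Q \in \{P, Q\}$. If $P \wedge Q = P$, then the defining property of the meet gives $P = P \wedge Q \preceq Q$, so $P \preceq Q$. Otherwise $P \wedge Q = Q$, and symmetrically $Q \preceq P$. Either way, $P$ and $Q$ are comparable, which is what we wanted.

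The proof is essentially a one-line application of submodularity, so there is no real obstacle beyond recognizing that the dart-level inclusion $(P \wedge Q) \cup (P \vee Q) \subseteq P \cup Q$ restricts the meet to live inside $G[E(P \cup Q)]$, after which the uniqueness of paths in that subgraph does all the work. Note that one could equally carry out the argument with $P \vee Q$ in place of $P \wedge Q$; the two variants are symmetric under reversal of $\preceq$.
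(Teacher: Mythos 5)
Your proof is correct and is essentially the paper's own argument: the paper applies submodularity to conclude $P \vee Q \subseteq P \cup Q$, hence $P \vee Q \in \{P, Q\}$, whereas you run the same one-line argument with the meet instead of the join, a trivially symmetric variant you yourself note.
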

 \begin{proof}
  By submodularity, $P \vee Q \subseteq P \cup Q$ and thus $P \vee Q$ is a path in $G[E(P \cup Q)]$. This implies $P \vee Q = P$ or $P \vee Q = Q$.\qed
 \end{proof}

\begin{lemma}
\label{lem_counterexample}
  The set of $s$-$t$-paths in $K^{s-t}_{3,3}$ or $K^{s-t}_{5}$ (or a subdivision of these graphs) cannot be equipped with a partial order $\preceq$, such that $(\mathcal{P}, \preceq)$ is a consecutive and submodular lattice.
\end{lemma}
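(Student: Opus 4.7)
The plan is to exhibit four $s$-$t$-paths that any partial order $\preceq$ inducing a submodular lattice must totally order, and then to derive a contradiction from the consecutivity axiom. For $K^{s-t}_{3,3}$ I would use the four paths $P_1, P_2, P_3, P_4$ of Figure \ref{fig_counterexample}, namely $s \to v_1 \to v_4 \to t$, $s \to v_2 \to v_4 \to t$, $s \to v_2 \to v_3 \to t$, and $s \to v_1 \to v_3 \to t$.

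First, I would verify by direct inspection that for every pair $i \neq j$ the induced subgraph $G[E(P_i \cup P_j)]$ admits only $P_i$ and $P_j$ as simple $s$-$t$-paths; each of the six unions has five edges arranged so that the vertex degrees in the subgraph force at most two $s$-$t$-paths (for example, $G[E(P_1 \cup P_2)]$ uses only $\{e_1, e_2, e_3, e_6, e_7\}$, and $t$ is reachable only through $v_4$, which in turn is reachable only via $v_1$ or $v_2$). By Lemma \ref{lem_comparability}, each pair is then comparable under $\preceq$, so $\{P_1, P_2, P_3, P_4\}$ forms a chain.

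Next, I would reformulate consecutivity on a chain: for any three elements $A \preceq B \preceq C$, $A \cap C \subseteq B$; equivalently, for each edge $e$, the set of paths of the chain containing $e$ must be an order-convex subset. Inspection of the pairwise intersections gives the only nonempty ones as $P_1 \cap P_2 = \{e_7\}$, $P_2 \cap P_3 = \{e_1\}$, $P_3 \cap P_4 = \{e_8\}$, and $P_4 \cap P_1 = \{e_2\}$, each of the four edges belonging to exactly two of the four paths. Hence in the chain each of the pairs $\{P_1, P_2\}$, $\{P_2, P_3\}$, $\{P_3, P_4\}$, $\{P_4, P_1\}$ must consist of two chain-adjacent elements. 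These four adjacencies form a 4-cycle on $\{P_1, P_2, P_3, P_4\}$, whereas the adjacency graph of any chain of four elements is a three-edge path, which is the desired contradiction.

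For $K^{s-t}_5$ the strategy is the same: exhibit four $s$-$t$-paths whose pairwise unions again contain only the two chosen paths and whose pairwise intersections again produce an analogous cyclic pattern of doubly-covered edges. For subdivisions of either graph, the four original paths lift canonically by replacing each edge with its subdivision subpath, and both the \emph{only two paths per union} property and the intersection pattern (at the level of original-edge fibers) are preserved, so the adjacency-cycle obstruction carries over unchanged. The most delicate step is finding the right four paths in $K^{s-t}_5$: because $K^{s-t}_5$ has more edges packed among fewer vertices, ensuring that no pairwise union admits a shortcut $s$-$t$-path is considerably trickier than in the bipartite case and may require paths of mixed lengths or darts traversing edges in both directions.
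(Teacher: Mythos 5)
Your argument for $K^{s-t}_{3,3}$ is correct and is essentially the paper's own proof: the same four paths, the same appeal to Lemma \ref{lem_comparability}, and the same consecutivity obstruction (your ``adjacency 4-cycle versus 3-edge chain'' phrasing is just a clean repackaging of the paper's case analysis with $\arc{e_7}, \arc{e_1}, \arc{e_8}, \arc{e_2}$). The treatment of subdivisions is also fine.

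The gap is in the $K^{s-t}_{5}$ case, and it is not merely a matter of the search being ``trickier'': the four-path strategy you propose provably cannot be carried out there. Write the internal vertices as $a,b,c$. If the union of two paths contains both $sx$ and $xt$ for some internal vertex $x$, then $s\text{-}x\text{-}t$ is a third path in the union unless one of the two paths \emph{is} $s\text{-}x\text{-}t$. Chasing this constraint shows that the pairs with the ``only two paths in the union'' property are very limited: no two length-4 paths qualify; among the six length-3 paths the qualifying pairs form only a 6-cycle (consecutive paths sharing a single source or sink edge); and the three length-2 paths are pairwise edge-disjoint. A case analysis over all 4-cliques of this ``exclusivity'' graph then shows that the forced-adjacency relation you need is always a disjoint union of paths on the four elements, never a 4-cycle, a triangle, or a degree-3 star. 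The best achievable configuration is three forced adjacencies in a row, e.g.\ $s a t$, $s a b t$, $s c b t$, $s c t$ (forced via $sa$, $bt$, $sc$), whose two endpoints are the disjoint paths $sat$ and $sct$, so the cycle never closes. This is precisely why the paper's appendix proof for $K^{s-t}_{5}$ uses \emph{eight} paths and a longer propagation: several of the required comparabilities there are not obtained from Lemma \ref{lem_comparability} at all, but are deduced step by step from consecutivity applied to previously established chains. Your proposal, as stated, does not contain this idea, and without it the $K^{s-t}_{5}$ half of the lemma (and hence the characterization theorem that depends on it) is unproven.
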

 \begin{proof}
  Assume by contradiction $\preceq$ is defined such that $(\mathcal{P}, \preceq)$ is a consecutive and submodular lattice with meet $\wedge$ and join $\vee$ on the set of $s$-$t$-paths in $K^{s-t}_{3,3}$. Consider the four paths $P_{1}, P_{2}, P_{3}, P_{4}$ depicted in Figure \ref{fig_counterexample}. It can easily be checked that $P_{i}$ and $P_{j}$ are the only two $s$-$t$-paths in $G[E(P_{i} \cup P_{j})]$ for $i \neq j$. Thus, by Lemma \ref{lem_comparability}, the paths form a chain w.r.t.\ $\preceq$. Since $P_{1}$ and $P_{2}$ are the only two of the paths sharing $\arc{e_{7}}$ as common dart, and $P_{2}$ and $P_{3}$ are the only two of the paths sharing $\arc{e_{1}}$, and $P_{3}$ and $P_{4}$ are the only two of the paths sharing $\arc{e_{8}}$, consecutivity demands that either $P_{1}\ \succ\ P_{2}\ \succ\ P_{3}\ \succ\ P_{4}$ or $P_{1}\ \prec\ P_{2}\ \prec\ P_{3}\ \prec\ P_{4}$. In both cases $\arc{e_{2}} \in P_{1} \cap P_{4} \setminus P_{2}$ yields a contradiction to consecutivity.
  Note that all arguments used in this proof are invariant under the operation of subdividing edges. The result for $K^{s-t}_{5}$ can be derived by a similar line of argumentation.\qed
 \end{proof}

\begin{theorem}
 A graph is $s$-$t$-planar if and only if it is planar and there is a partial order on the set of its $s$-$t$-paths that induces a consecutive and submodular lattice.
\end{theorem}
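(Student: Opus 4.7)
My plan is to prove the two directions separately. The forward direction is immediate from the theorem at the end of Section \ref{sec_stplanar}: if $G$ is $s$-$t$-planar then the left/right relation $\rightof$ already equips $\mathcal{P}$ with a consecutive and submodular lattice structure. The remaining work is the backward direction, which I plan to prove by contraposition: assuming $G$ is planar but not $s$-$t$-planar, I will derive a contradiction from the hypothetical existence of a partial order $\preceq$ on $\mathcal{P}$ inducing a submodular and consecutive lattice. The first step is to exhibit a Kuratowski-style obstruction inside $G$. Non-$s$-$t$-planarity is equivalent to the graph $G + st$ (obtained by adjoining an edge between $s$ and $t$) being non-planar, since $s$-$t$-planarity says precisely that $s$ and $t$ can be embedded on a common face. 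By Kuratowski's theorem, $G+st$ then contains a subdivision of $K_{3,3}$ or $K_5$, and since $G$ itself is planar this subdivision must use the new edge $st$. Because $K_{3,3}$ and $K_5$ are edge-transitive, removing $st$ yields a subdivision $H$ of $K^{s-t}_{3,3}$ or of $K^{s-t}_5$ contained in $G$, in which $s$ and $t$ play the roles of the two distinguished branch vertices.

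With $H$ in hand, the plan is to run the argument of Lemma \ref{lem_counterexample} inside $(\mathcal{P}, \preceq)$. I take the four paths $P_1, P_2, P_3, P_4$ of $H$ corresponding to the ones depicted in Figure \ref{fig_counterexample} (and proceed analogously for $K^{s-t}_5$). The key observation is that all edges of any $P_i$ lie inside $H$, so $G[E(P_i \cup P_j)]$ equals $H[E(P_i \cup P_j)]$ and still contains only $P_i$ and $P_j$ as $s$-$t$-paths -- a property inherited from $K^{s-t}_{3,3}$ and preserved under subdivision. Lemma \ref{lem_comparability} then applies inside $G$ (its proof uses only submodularity of the ambient lattice) and forces $P_i \vee P_j \in \{P_i, P_j\}$, so the four paths are pairwise $\preceq$-comparable and therefore form a chain. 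Consecutivity of $\preceq$ then determines the chain order, since the darts $\arc{e_7}$, $\arc{e_1}$, $\arc{e_8}$ each appear in exactly one of the pairs $(P_1, P_2)$, $(P_2, P_3)$, $(P_3, P_4)$; the chain must therefore read $P_1 \preceq P_2 \preceq P_3 \preceq P_4$ or its reverse, and then $\arc{e_2} \in P_1 \cap P_4 \setminus P_2$ violates consecutivity.

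I expect the main technical point to be the transfer of the condition ``$G[E(P_i \cup P_j)]$ contains no $s$-$t$-paths besides $P_i$ and $P_j$'' from $K^{s-t}_{3,3}$ through the subdivision $H$ to the ambient graph $G$. The content is that subdividing edges cannot create new $s$-$t$-paths and that the induced-subgraph operation $G[E(\cdot)]$ only sees vertices incident to the chosen edges, so no shortcut path of $G$ can sneak in; nevertheless, this is the one place where the Kuratowski reduction has to be combined carefully with the subgraph bookkeeping used throughout the paper, and it is what makes the argument of Lemma \ref{lem_counterexample} portable from the Kuratowski graph to an arbitrary non-$s$-$t$-planar graph containing it.
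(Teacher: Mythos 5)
Your forward direction and the overall Kuratowski strategy match the paper's, but the backward direction has a genuine gap in the sentence claiming that, after removing $st$ from the Kuratowski subdivision of $G+st$, ``$s$ and $t$ play the roles of the two distinguished branch vertices.'' This is not justified and is false in general: the new edge $e=st$ need only lie somewhere on one of the subdivided edges of the $K_{3,3}$- or $K_{5}$-subdivision, i.e., on a path joining two branch vertices $s^{\prime}$ and $t^{\prime}$, and $s$, $t$ may well be interior (degree-two) vertices of that path rather than branch vertices. Concretely, attach pendant vertices $s$ and $t$ to the two endpoints $u,v$ of the deleted edge in a plane copy of $K^{u-v}_{3,3}$: the resulting graph is planar but not $s$-$t$-planar, yet $s$ and $t$ have degree $2$ in $G+st$ and so cannot be branch vertices of any Kuratowski subdivision. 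Edge-transitivity of $K_{3,3}$ and $K_{5}$ lets you assume any particular edge of the abstract Kuratowski graph is the one whose subdivision path contains $e$, but it does not let you relocate the branch vertices to $s$ and $t$.

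The paper closes exactly this gap: it names the branch vertices $s^{\prime}$, $t^{\prime}$ at the ends of the subdivided edge through $e$, applies Lemma \ref{lem_counterexample} to obtain an un-orderable family of $s^{\prime}$-$t^{\prime}$-paths in the subdivision of $K^{s^{\prime}-t^{\prime}}_{3,3}$ or $K^{s^{\prime}-t^{\prime}}_{5}$ contained in $G$, and then extends each of these to an $s$-$t$-path of $G$ by prepending the $s$-$s^{\prime}$-segment and appending the $t^{\prime}$-$t$-segment left over from the subdivided edge. Your argument needs this extension step, together with the (easy but necessary) observation that adding a common prefix and suffix to all paths in the family preserves both the ``only two $s$-$t$-paths in $G[E(P_{i}\cup P_{j})]$'' property and the dart-membership patterns used to derive the consecutivity contradiction. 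The transfer issue you flag as the main technical point (that subdividing edges creates no new paths and $G[E(\cdot)]$ sees only the chosen edges) is real but comparatively minor, since it is already absorbed into the statement of Lemma \ref{lem_counterexample}, which covers subdivisions explicitly.
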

 \begin{proof}
  Necessity follows from Theorem \ref{thm_leftmostpathlattice}. For sufficiency, let $G$ be a graph that is planar but not $s$-$t$-planar. Let $G^{\prime}$ be the graph that is obtained from $G$ by adding an edge $e$ from $s$ to $t$. As $G$ is not $s$-$t$-planar, $G^{\prime}$ is not planar and thus, by Kuratowski's theorem, it contains a subdivision $K^{\prime}$ of $K_{3,3}$ or $K_{5}$. As $G$ is planar, one of the subdivided edges in $K^{\prime}$ must contain $e$. Let $s^{\prime}$ and $t^{\prime}$ be the endpoints that are connected by this subdivided edge. Clearly, $G$ contains a subdivision $K$ of $K^{s^{\prime}-t^{\prime}}_{3,3}$ or $K^{s^{\prime}-t^{\prime}}_{5}$, respectively. By Lemma \ref{lem_counterexample}, there is a set of $s^{\prime}$-$t^{\prime}$-paths in $K$ (and thus in $G$) that cannot be equipped with any partial order that induces a consecutive and submodular lattice. These paths can all be extended to $s$-$t$-paths in $G$ by using the $s$-$s^{\prime}$-path and the $t^{\prime}$-$t$-path contained in the subdivided edge connecting $s^{\prime}$ and $t^{\prime}$ in $K$. Thus $G$ contains a set of $s$-$t$-paths that cannot be equipped with a partial order of a consecutive and submodular lattice.\qed
 \end{proof}
 
\section{Conclusion and outlook}

We have established a connection between optimization on lattice polyhedra and planar network flow theory by showing that the structure exploited by two important flow algorithms corresponds to a submodular lattice. For the $s$-$t$-planar case, this implies that the uppermost path algorithm of Ford and Fulkerson is a special case of a more general two-phase greedy algorithm, which allows for certain types of weights on the lattice sets. Thus, a closer study of the weighted maximum flow problem is of obvious interest. First results in this direction can be found in \cite{Ma09}. Future research could also deal with the question in how far the structural result presented here lead to new insights for existing or new planar graph algorithms. Finally, it might be interesting to investigate whether the path lattice is distributive and our results extend a line of distributive lattices in planar graph structures connected to the left/right relation \cite{FK09}.

\newpage

\appendix

\section*{Appendix}
\setcounter{section}{1}

\subsection*{Appendix to Section \ref{sec_stplanar}: Existence and uniqueness of the uppermost path}

\textbf{Theorem and Definition \ref{thm_uppermost_path_existence}} (Uppermost and lowermost path)\textbf{.} \textit{There is a unique path $U \in \mathcal{P}$ such that $\leftf(d) = f_\infty$ for all $d \in U$. It is called \emph{uppermost path} of $G$. There also is a unique path $L \in \mathcal{P}$ such that $\rightf(d) = f_\infty$ for all $d \in L$. This path is called \emph{lowermost path} of $G$.
}
\begin{proof}
 For existence it suffices to traverse the closed walk that comprises the boundary of the infinite face starting at $s$ in such a direction that $f_{\infty}$ is always on the left. By $s$-$t$-planarity, $t$ occurs on that walk, and by smoothing out cycles we get a simple $s$-$t$-path.
 
 Now suppose by contradiction there are two distinct paths $U_{1}, U_{2} \in \mathcal{P}$ with the uppermost path property. Then $\delta_{U_{1}} - \delta_{U_{2}} \neq 0$ is a circulation and thus contains a simple cycle $C$, which contains at least one dart $d_{1} \in U_{1}$ and one dart $d_{2} \in \rev{U_{2}}$, as neither of the paths contains a cycle on its own. Thus $\leftf(d_{1}) = f_{\infty}$ and $\rightf(d_{2}) = \leftf(\rev{d_{2}}) = f_{\infty}$, a contradiction. By the same arguments the lowermost path exists and is unique.
\end{proof}

\subsection*{Appendix to Section \ref{sec_general}: The proofs of the Lemmas \ref{lem_meetdefinition}, \ref{lem_no_crossing} and \ref{lem_nocounterclockwise}, and submodularity of the path lattice}

In Section \ref{sec_general}, we stated three lemmas without giving thorough proofs. We also omitted the submodularity of the path lattice in the proof of Theorem \ref{thm_leftmostpathlattice}. We make up for this in this part of the appendix. As already stated in the main article, the proofs will be of a certain length and not always of the intuitive kind.

First, we restate and prove Lemma \ref{lem_meetdefinition}. The proof consists of a simple case distinction, as promised in the main article.
\newline\newline
\textbf{Lemma \ref{lem_meetdefinition}.} 
\textit{
	Let $P, Q \in \mathcal{P}$ and $\phi := \Phi(\delta_{P} - \delta_{Q})$.
	\begin{itemize}
		\item Let $S^{+} := \{f \in V^{\ast} : \phi(f) > 0\}$ and $\delta^{P \wedge Q} := \delta_{P} - \sum_{f \in S^{+}} \phi(f) \delta_{f}$. Then $\delta^{P \wedge Q} \in \{-1, 0, 1\}^{E}$ and $D^{P \wedge Q} := \setof(\delta^{P \wedge Q}) \subseteq P \cup Q$.
		\item Let $S^{-} := \{f \in V^{\ast} : \phi(f) < 0\}$ and $\delta^{\vee} := \delta_{P} - \sum_{f \in S^{-}} \phi(f) \delta_{f}$. Then $\delta^{\vee} \in \{-1, 0, 1\}^{E}$ and $D^{\vee} := \setof(\delta^{\vee}) \subseteq P \cup Q$.
	\end{itemize}
}
\begin{proof}
	We only show the first statement, the second one follows analogously.
	
	Let $d \in \darts{E}$ and $r := \rightf(d), l := \leftf(d)$. We show that $\delta^{P \wedge Q}(d) \in \{-1, 0, 1\}$ and that $\delta^{P \wedge Q}(d) = 1$ implies $d \in P \cup Q$. Note that \[\sum_{f \in S^{+}} \phi(f) \delta_{f}(d) = \sum_{f \in V^{\ast}} \mathbbm{1}_{S^{+}}(f) \phi(f) \delta_{f}(d) = \mathbbm{1}_{S^{+}}(r) \phi(r) - \mathbbm{1}_{S^{+}}(l) \phi(l).\]
	Consider the four possible cases:
	
	\begin{enumerate}
		\item $l, r \in S^{+}$:	We have $\delta^{P \wedge Q}(d) = \delta_{P}(d) - (\phi(r) - \phi(l)) = \delta_{P}(d) - (\delta_{P} - \delta_{Q})(d) = \delta_{Q}(d) \in \{-1, 0, 1\}$.
		
		\item $l \in S^{+}, r \notin S^{+}$: In this case $\delta^{P \wedge Q}(d) = \delta_{P}(d) + \phi(l)$. On the one hand, $\delta_{P}(d) + \phi(l) > -1$ as $\delta_{P}(d) \geq -1$ and $\phi(l) > 0$. On the other hand, $\delta_{P}(d) + \phi(l) \leq \delta_{P}(d) - (\phi(r) - \phi(l)) = \delta_{P}(d) - (\delta_{P} - \delta_{Q})(d) = \delta_{Q}(d) \leq 1$, as $\phi(r) \leq 0$. So $\delta^{P \wedge Q}(d) \in \{0, 1\}$, and, in particular, $\delta^{P \wedge Q}(d) = 0$ if $d \notin Q$.
		
		\item $r \in S^{+}, l \notin S^{+}$: This is equivalent to case (2) holding for $\rev{d}$. In particular $\delta^{P \wedge Q}(d) = -\delta^{P \wedge Q}(\rev{d}) \in \{-1, 0\}$.
		\item $l, r \notin S^{+}$: We have $\sum_{f \in S^{+}} \phi(f) \delta_{f}(d) = 0$, and hence $\delta^{P \wedge Q}(d) = \delta_{P}(d) \in \{-1, 0, 1\}$.
	\end{enumerate}
	Note that all cases with $\delta^{P \wedge Q}(d) = 1$ require $d \in P$ or $d \in Q$, so $D^{P \wedge Q} \subseteq P \cup Q$.
\end{proof}

Before we can proceed to the remaining open proofs of this section, we need to introduce some additional notations.

\subsubsection*{Simplifying assumptions and notations}

In some of the proofs we will also construct cycles and paths from parts of other cycles and paths. Thus, we need the following notation.

\begin{notation}
 If $P = d_{1}, \ldots, d_{k}$ is an $x$-$y$-walk and $Q = d^{\prime}_{1}, \ldots, d^{\prime}_{k^{\prime}}$ is a $y$-$z$-walk, we denote the $x$-$z$ walk $d_{1}, \ldots, d_{k}, d^{\prime}_{1}, \ldots, d^{\prime}_{k^{\prime}}$ by $P \circ Q$. We denote the $y$-$x$-walk $\rev{d_{k}}, \ldots, \rev{d_{1}}$ by $\rev{P}$. If $P$ is a simple path or cycle and $\tail(d_{i}) = v$, $\head(d_{j}) = w$ for $i < j$, we denote the $v$-$w$-path $d_{i}, \ldots, d_{j}$ by $P[v, w]$.
\end{notation}

Furthermore, we will let $\pi$ be the \emph{rotation system} of the embedded graph, i.e., $\pi$ is a permutation on the set of darts that gives for each vertex the counterclockwise order in which the darts leave that vertex in the embedding. Intuitively, if some observer is standing on a vertex, looking in direction of the dart $d$ leaving that verex, then $\pi(d)$ is the next dart he is seeing when turning his sight in counterclockwise direction. This helps us to formalize the intuitive notions of darts touching a path or cycle from the left.

\begin{notation}
	Let $d \in \darts{E}$ be a dart, $P$ be a simple path or cycle and $v \in V$ be a vertex on $P$ such that $\head(d_{1}) = v = \tail(d_{2})$ for $d_{1}, d_{2} \in P$. We say that $d$ \emph{leaves $P$ to the left} at $v$ if
	\[\min \{k \in \mathbb{N}: \pi^{k}(d_{2}) = d\} < \min \{k \in \mathbb{N}: \pi^{k}(d_{2}) = \rev{d_{1}}\}.\]
	We say $d$ \emph{enters $P$ to the left} at $v$ if $\rev{d}$ leaves $P$ to the left at $v$. 
\end{notation}

The auxilliary results in the rest of this section will exclusively deal with the structure of $D^{P \wedge Q}$. Their proofs will only use darts of $P$ and $Q$ and the face potential $\phi$. For this reason it is convenient to assume that $G$ only contains the edges of $P$ and $Q$, as we can identify $\phi$ with a face potential in this graph by Remark \ref{rem_facepotentials}. All results can be obtainend analogously for $D^{P \vee Q}$ with exchanged occurrences of ``left'' and ``right'', ``clockwise'' and ``counterclockwise''.

\begin{assumption}
	We assume that $G = G[E(P \cup Q)]$.
\end{assumption}

In order to keep the proofs as understandable as possible, we introduce some further helpful naming conventions. It is important to keep in mind that some darts of $P \cup Q$ are in $D^{P \wedge Q}$, while others may not.

\begin{notation}
	For $d \in \darts{E}$ we define $\phi_{l}(d) := \phi(\leftf(d))$ and $\phi_{r}(d) := \phi(\rightf(d))$. We call a dart $d \in D^{P \wedge Q}$ \emph{solid} and a dart $d \in (P \cup Q) \setminus D^{P \wedge Q}$ \emph{hidden}. A dart in $P \setminus Q$ is called \emph{$P$-dart}, while a dart in $Q \setminus P$ is called \emph{$Q$-dart}. 
\end{notation}
	
Note that both solid and hidden darts can influence the face potential. Darts in $P \cap Q$ which are neither $P$- nor $Q$-darts have no influence on the potential, so they are rather uninteresting -- we will even eleminate them in a later assumption. Note that a dart can be a $P$-dart while its reverse can be a $Q$-dart, but only one of the two darts can be solid. Crossing a $P$-dart from right to left means that the potential decreases by $1$, and crossing a $Q$-dart from right to left means that the potential increases by 1 (so in the case of crossing an edge with anti-parallel $P$- and $Q$-darts the potential changes by $2$, accordingly). We make a first use of our definition by formulating an insight  obtained from the proof of Lemma \ref{lem_meetdefinition}.

\begin{corollary}
\label{cor_leftpotential}
	If $d$ is a solid $P$-dart, then $\phi_{l}(d) < 0$. If $d$ is a solid $Q$-dart, then $\phi_{l}(d) > 0$.
\end{corollary}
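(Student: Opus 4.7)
The plan is to exploit the four-case analysis carried out in the proof of Lemma \ref{lem_meetdefinition}. Writing $l := \leftf(d)$ and $r := \rightf(d)$, that analysis expresses $\delta^{P \wedge Q}(d)$ explicitly in terms of $\delta_P(d)$, $\phi(l)$ and $\phi(r)$, splitting on which of $l, r$ lie in $S^+$. For each type of solid dart I would identify which of the four sub-cases is actually consistent with the hypothesis and then read off the sign of $\phi(l)$.

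For a solid $P$-dart $d$ we have $\delta_P(d) = 1$, $d \notin Q$ (so $\delta_Q(d) \in \{-1, 0\}$), and $\delta^{P \wedge Q}(d) = 1$. In the case $l, r \in S^+$, the identity from Lemma \ref{lem_meetdefinition} reduces $\delta^{P \wedge Q}(d)$ to $\delta_Q(d)$, which would force $d \in Q$; in the case $l \in S^+, r \notin S^+$ it reduces to $\delta_P(d) + \phi(l) = 1 + \phi(l)$, which forces $\phi(l) = 0$ and contradicts $l \in S^+$; and in the case $r \in S^+, l \notin S^+$ the same lemma guarantees $\delta^{P \wedge Q}(d) \leq 0$. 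Only the case $l, r \notin S^+$ survives, giving $\phi(l), \phi(r) \leq 0$. To sharpen $\phi(l) \leq 0$ to $\phi(l) < 0$, I would invoke the remark $\phi(r) - \phi(l) = (\delta_P - \delta_Q)(d)$; since $\delta_P(d) = 1$ and $\delta_Q(d) \in \{-1, 0\}$, the right-hand side is at least $1$, and together with $\phi(r) \leq 0$ this yields $\phi(l) \leq -1 < 0$.

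The argument for a solid $Q$-dart $d$ is symmetric: now $\delta_Q(d) = 1$, $d \notin P$, and $\delta_P(d) \in \{-1, 0\}$. The case $r \in S^+, l \notin S^+$ again rules out solidity, and the case $l, r \notin S^+$ would require $\delta^{P \wedge Q}(d) = \delta_P(d) = 1$, contradicting $d \notin P$. The two remaining cases both have $l \in S^+$, so $\phi(l) > 0$ follows immediately from the definition of $S^+$.

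No serious obstacle is expected: the statement is essentially a bookkeeping corollary of the case distinction already done in Lemma \ref{lem_meetdefinition}. The only subtle point is the possibility that $\rev{d}$ lies in the other path, which is handled uniformly by recording the bounds $\delta_Q(d) \in \{-1, 0\}$ and $\delta_P(d) \in \{-1, 0\}$ used in the case check, and by using the exact (integer) value of $(\delta_P - \delta_Q)(d)$ to obtain the strict inequality in the $P$-dart case.
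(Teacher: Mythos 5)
Your proposal is correct and follows essentially the same route as the paper: both arguments determine which of the four cases from the proof of Lemma \ref{lem_meetdefinition} is compatible with solidity of the dart and then read off the sign of $\phi(l)$. If anything, your treatment of the strict inequality in the $P$-dart case (allowing $\delta_Q(d) = -1$ when $\rev{d} \in Q$) is slightly more careful than the paper's, which writes $\phi_l(d) = \phi_r(d) - 1$.
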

\begin{proof}
	If $d$ is a solid $P$-dart, i.e., $d \in D^{P \wedge Q} \setminus Q$, then case (4) must apply to $d$ in the proof of Lemma \ref{lem_meetdefinition}. Thus $\rightf(d) \notin S^{+}$ and $\phi_{l}(d) = \phi_{r}(d) - 1 < 0$ in this case. If $d$ is solid $Q$-dart, i.e., $d \in D^{P \wedge Q} \setminus P$, then case (1) or (2) must apply to $d$ in the proof of Lemma \ref{lem_meetdefinition}. Thus $\leftf(d) \in S^{+}$ and $\phi_{l}(d) > 0$ in this case.
\end{proof}

A more intuitive reason for the above result is, that a $Q$-dart can only become solid if it is part of a clockwise (positive potential) cycle in $\delta_{P} - \delta_{Q}$, which is then subtracted from $\delta_{P}$ to obtain $\delta^{P \wedge Q}$. As a direct implication of this corollary, we get Lemma \ref{lem_no_crossing}.
\newline\newline
\textbf{Lemma \ref{lem_no_crossing}.} \textit{If $R \subseteq D^{P \wedge Q}$ is a simple path and $C \subseteq D^{P \wedge Q}$ is a simple cycle such that $R \cap C = \emptyset$, then $R$ does not cross $C$, i.e., all darts of $R$ are either in the interior of $C$ or none of them is.}
\begin{proof}
 By contradiction assume there is a vertex $v \in V$ where $R$ and $C$ cross. Then $v$ is incident to four solid darts, in particular there is a solid $P$-dart $d_{P}$ and a solid $Q$-dart $d_{Q}$ leaving from $v$, one of them comming from the path, the other from the cycle. As the path and the cycle cross in $v$, we must have either $\pi(d_{P}) = d_{Q}$ or $\pi(d_{Q}) = d_{P}$. In the first case $\phi_{l}(d_{P}) = \phi_{r}(d_{Q}) = \phi_{l}(d_{Q}) - 1 \geq 0$ by Corollary \ref{cor_leftpotential}, which is a contradition to the very same corollary. The second case follows analogously.
\end{proof}

Indeed, it can even be shown that the order of incoming and outgoing darts at vertices incident to four solid darts is fixed (cf. Figure \ref{fig_ccwcycle_proof} (a) for an illustration).

\subsubsection*{``Change of tracks'' and the non-existence of solid counterclockwise cycles}

We now want to investigate those vertices at which solid $P$-darts and $Q$-darts meet more closely. We call the result the ``change of tracks'' lemma, as this is the exact situation on which we will apply it -- a change of tracks on a path or cycle, when a $P$-dart is followed by a $Q$-dart or vice versa. It basically states the following observation: In case there is at least one solid $P$-dart and one solid $Q$-dart at a vertex, we can deduce that the left potential of the $P$-dart is $-1$ and the left potential of the $Q$-dart is $1$. This change of potential implies that there are two additional (possibly hidden) darts of $P \cup Q$ at the particular vertex, one entering and one leaving the path to the left (cf. Figure \ref{fig_ccwcycle_proof} (b) for an illustration of this constellation).

\begin{lemma}[Change of tracks lemma]
\label{lem_changeoftracks}\index{change of tracks lemma}
	Let $d_{P}$ be a solid $P$-dart and $d_{Q}$ be a solid $Q$-dart. If  $\head(d_{P}) = \tail(d_{Q})$, then $\{d_{Q}, \pi(d_{Q}), \ldots, \rev{d_{P}}\}$ contains a $P$-dart and the reverse of a $Q$-dart. If $\head(d_{Q}) = \tail(d_{P})$, then $\{d_{P}, \pi(d_{P}), \ldots, \rev{d_{Q}}\}$ contains a $Q$-dart and the reverse of a $P$-dart. In both cases, $\phi_{l}(d_{P}) = -1$ and $\phi_{l}(d_{Q}) = 1$.
\begin{proof}
	We only show the statement for the case $v := \head(d_{P}) = \tail(d_{Q})$. Since $\phi_{l}(d_{P}) < 0$ and $\phi_{l}(d_{Q}) > 0$, the potential decreases by at least $2$ when traversing the faces $\leftf(d_{Q})$, $\leftf(\pi(d_{Q}))$, $\ldots$, $\rightf(\rev{d_{P}})$. However, the sequence $d_{Q}, \pi(d_{Q}), \ldots, \rev{d_{P}}$ contains at most four darts, as there can be only each one incoming and each one outgoing dart from $P$ and $Q$ at $v$, respectively. As the incoming dart from $P$ is $d_{P}$ and the outgoing dart from $Q$ is $d_{Q}$, there is only an outgoing dart $d_{P}^{\prime} \in P$ and an incoming dart $d_{Q}^{\prime} \in Q$ that can decrease the potential each by $1$ on the counterclockwise traversal of the faces from $\leftf(d_{Q})$ to $\leftf(d_{P})$. So $d_{P}^{\prime}$ and $\rev{d_{Q}^{\prime}}$ must both be contained in $\{d_{Q}, \pi(d_{Q}), \ldots, \rev{d_{P}}\}$ (note that $d_{P}^{\prime} = \rev{d_{Q}^{\prime}}$ is perfectly possible). As they cannot change the potential by more than $2$, we  have $\phi_{l}(d_{P}) = -1$ and $\phi_{l}(d_{Q}) = 1$ as claimed.
\end{proof}
\end{lemma}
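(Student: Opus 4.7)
I would focus on the case $v := \head(d_P) = \tail(d_Q)$; the other case follows by a mirror-image argument. The plan is to account for the total change of $\phi$ along the counterclockwise sweep around $v$ from just left of $d_Q$ to just left of $d_P$, i.e., through the faces $\leftf(d_Q), \leftf(\pi(d_Q)), \ldots, \rightf(\rev{d_P}) = \leftf(d_P)$. Because $\phi$ is integer-valued (it is the face potential of the integer circulation $\delta_P - \delta_Q$), Corollary \ref{cor_leftpotential} upgrades to $\phi_l(d_Q) \geq 1$ and $\phi_l(d_P) \leq -1$, so the net change along the sweep is at most $-2$.

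Next, I would enumerate the outgoing darts at $v$. Since $P$ and $Q$ are simple and $G = G[E(P \cup Q)]$, every outgoing dart at $v$ belongs to $\{d_Q,\, d_P^{\prime},\, \rev{d_P},\, \rev{d_Q^{\prime}}\}$, where $d_P^{\prime}$ denotes the outgoing $P$-dart and $d_Q^{\prime}$ the incoming $Q$-dart at $v$. Consequently, the only outgoing darts that can appear strictly between $d_Q$ and $\rev{d_P}$ in the sweep lie in $\{d_P^{\prime}, \rev{d_Q^{\prime}}\}$ (possibly coinciding, when the outgoing $P$-edge equals the incoming $Q$-edge at $v$). A short check, using that $d_P$ and $d_Q$ are solid and hence $d_P \neq \rev{d_Q}$, then confirms that $d_P^{\prime}$ really is a $P$-dart and $\rev{d_Q^{\prime}}$ really is the reverse of a $Q$-dart.

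Crossing an outgoing dart $d$ in counterclockwise direction changes $\phi$ by $-(\delta_P - \delta_Q)(d)$. A direct case distinction gives $(\delta_P - \delta_Q)(d_P^{\prime}),\, (\delta_P - \delta_Q)(\rev{d_Q^{\prime}}) \in \{1, 2\}$, with the value $2$ attained exactly in the coincident case $d_P^{\prime} = \rev{d_Q^{\prime}}$. Since the required drop is at most $-2$ while there are at most two candidate crossings, each contributing $-1$ or $-2$, either the coincident dart must be crossed (supplying the full $-2$) or both distinct candidates must be crossed (each supplying exactly $-1$). In either case the sweep contains both a $P$-dart and the reverse of a $Q$-dart, and the drop is exactly $-2$, pinning $\phi_l(d_Q) = 1$ and $\phi_l(d_P) = -1$. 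The main obstacle is uniformly handling the coincident and distinct subcases: in particular, verifying that when $d_P^{\prime} = \rev{d_Q^{\prime}}$, the relations $d_P^{\prime} \in P$ and $\rev{d_P^{\prime}} \in Q$ force $(\delta_P - \delta_Q)(d_P^{\prime}) = 2$, so that a single crossing can shoulder the entire potential drop, while in the distinct case the two crossings must each contribute exactly $-1$.
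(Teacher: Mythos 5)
Your proposal is correct and follows essentially the same route as the paper's proof: both bound the counterclockwise potential drop around $v$ from below by $2$ via Corollary \ref{cor_leftpotential}, observe that simplicity of $P$ and $Q$ leaves only the outgoing $P$-dart and the reversed incoming $Q$-dart as candidates for decreasing the potential, and conclude that both must occur in the sweep with the drop exactly $2$. Your explicit treatment of the coincident case $d_P' = \rev{d_Q'}$ is only a more detailed spelling-out of what the paper notes parenthetically.
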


\begin{figure}[t]
 \begin{center}

\begin{tikzpicture}
	
	\node[smallNode] (vchange) at (-4, -0.5) {};
	\draw (vchange) +(-2.5, 0.75) node {(b)};
	\draw[normalEdge, blue, very thick, <-] (vchange) -- +(-1.5, 0) node[midway, below] {$d_{P}$};
	\draw[normalEdge, red, very thick, ->] (vchange) -- +(1.5, 0) node[midway, below] {$d_{Q}$};
	\draw[normalEdge, blue, dashed, ->] (vchange) -- +(0.2, 1.3) node[right] {$d_{P}^{\prime}$};
	\draw[normalEdge, red, dashed, <-] (vchange) -- +(-0.2, 1.3) node[left] {$d_{Q}^{\prime}$};
	\draw (vchange) ++(-1, 0.5) node[draw] {$-1$};
	\draw (vchange) ++(1, 0.5) node[draw] {$1$};
	
	
	\node[smallNode] (vfour) at (-4, 2.5) {};
	\draw (vfour) +(-2.5, 0) node {(a)};
	\draw[normalEdge, blue, very thick, <-] (vfour) -- +(-1.3, 0.5) node[left] {$d^{P}_{\text{in}}$};
	\draw[normalEdge, red, very thick, <-] (vfour) -- +(1.3, -0.5) node[right] {$d^{Q}_{\text{in}}$};
	\draw[normalEdge, blue, very thick, ->] (vfour) -- +(1.3, 0.5) node[right] {$d^{P}_{\text{out}}$};
	\draw[normalEdge, red, very thick, ->] (vfour) -- +(-1.3, -0.5) node[left] {$d^{Q}_{\text{out}}$};
	\draw (vfour) ++(0, 0.7) node[draw] {$-1$};
	\draw (vfour) ++(0, -0.7) node[draw] {$1$};
	
	
	\node[smallNode] (tdq) at (0.8, 0) {};
	\draw (tdq) +(-1.8, 1.3) node {(c)};
	
	\draw (tdq) +(-1, 1) node[smallNode] (tdp) {} edge[normalEdge, blue, bend right=45, very thick, ->] node[sloped, below] {$d_{P}$} (tdq);
	\draw (tdq) +(1.6, 0) node[smallNode] (hdq) {} edge[normalEdge, red, very thick, <-] node[sloped, below] {$d_{Q}$} (tdq);
	\draw (hdq) +(2, 0) node[smallNode] (tdc) {} edge[normalEdge, very thick, <-] (hdq);
	\draw (tdc) +(1.6, 0) node[smallNode] (hdc) {} edge[normalEdge, red, very thick, <-] node[sloped, below] {$d_{C}$} (tdc);
	\draw (hdc) +(1, 1) node[smallNode] (hdcp) {} edge[normalEdge, red, bend left=45, very thick, <-] node[sloped, below] {$d_{C}^{\prime}$} (hdc)
		edge[normalEdge, bend right=80, very thick, ->] (tdp);
	\draw (tdq) +(1, 1.2) node[smallNode] (hdp) {} edge[normalEdge, blue, dashed, <-]  node[sloped, above] {$d^{\prime}$} (tdq);
	\draw (hdc) +(-1, 1.2) node[smallNode] (tdpp) {} edge[normalEdge, blue, dashed, ->]  node[above, sloped] {$d^{\prime\prime}~~~~~$} (hdc) edge[normalEdge, blue, dashed, <-, bend right=50] (hdp);
	\draw (hdc) +(0.6, -1.2) node[smallNode] {} 
		edge[normalEdge, blue, snake=snake, line before snake=0.4cm, <-, dashed] (hdc)
		node[right] {$t$};
	\draw (hdq) ++(-0.6, 0.5) node[draw] {$1$};
	\draw (tdc) ++(0.4, 0.5) node[draw] {$\geq 2$};
	\draw (hdc) ++(0.25, 0.65) node[draw] {$\geq 1$};
	\draw (hdcp) +(-0.5, 1.5) node {$C$};
	
	
	\draw[gray] (-5, -1.5) rectangle (5, -2.75);
	\draw[normalEdge, ->, blue] (-4.5, -1.75) -- (-3.5, -1.75);
	\node[color=blue, anchor=base] at (-4, -2.2) {$P$-dart};
	\draw[normalEdge, ->, red] (-2.5, -1.75) -- (-1.5, -1.75);
	\node[color=red, anchor=base] at (-2, -2.2) {$Q$-dart};
	\draw[normalEdge, very thick, ->] (-0.5, -1.75) -- (0.5, -1.75);
	\node[anchor=base] at (0, -2.2) {solid dart};
	\draw[normalEdge, dashed, ->] (1.5, -1.75) -- (2.5, -1.75);
	\node[anchor=base] at (2, -2.2) {solid or};
	\node[anchor=base] at (2, -2.6) {hidden dart};
	\node[draw] (phi) at (4,-2) {$\phi(f)$};

\end{tikzpicture}
 \caption[test]{(a) The order at a vertex with four solid darts must be exactly as illustrated, thus paths and cycles of a decomposition of $D^{P \wedge Q}$ cannot cross (cf. Lemma \ref{lem_no_crossing}).\\ (b) A ``change of tracks'': both paths must occur between $d_{Q}$ and $\rev{d_{P}}$ in the counterclockwise order of the incident darts at the particular vertex (Lemma \ref{lem_changeoftracks}).\\ (c) A simplified\footnotemark\ illustration of the situation in the proof of Lemma \ref{lem_cycleconstellation}. When a change of tracks from $P$ to $Q$ occurs on the counterclockwise cycle $C$, $P$ enters the interior. It can only leave again towards the exterior if the potential is at least $2$ as the $Q$-darts must continue the solid cycle.}
 \label{fig_ccwcycle_proof}
 \end{center}
\end{figure}\footnotetext{In the illustration, $d_{Q}$ and $d_{C}$ already fulfill the requirements imposed on $d_{1}$ and $d_{2}$ by Lemma \ref{lem_cycleconstellation}. In the general case however, $d_{1}$ and $d_{2}$ can differ from $d_{Q}$ and $d_{C}$, as, e.g., $P[\tail(d_{Q}), \head(d_{C})]$ could meet $C[\tail(d_{Q}), \head(d_{C})]$ at intermediate vertices or even contain darts anti-parallel to those on the cycle.}

We can use Lemma \ref{lem_changeoftracks} to deduce a contradiction from the existence of any solid counterclockwise cycle. The intuitive argument is that a change of tracks from $P$ to $Q$ must happen on the cycle since none of the paths can comprise the cycle on its own, and thus a part of $P$ must leave from that counterclockwise cycle towards its interior. As $t$ is exterior of the cycle (it is at the infinite face), $P$ crosses the cycle again to continue on the exterior. At the vertex at which $P$ crosses the cycle, the potential left of the cycle darts must be greater or equal than $2$ because the subsequent $Q$-darts on the cycle are solid requiring a potential of at least $1$ on their left and the $P$-dart meeting the cycle from the left reduces the potential by $1$ (cf. Figure \ref{fig_ccwcycle_proof} (c)). A contradiction can now be obtained by counting the entering and leaving $P$-darts on the left of the cycle segment. Be aware that the line of argumentation just described is severely simplified and there are several special cases that need to be handled. Thus, we need to introduce a further simplifying assumption before we can proceed.

Suppose we contract an edge with a dart $d \in P \cap Q$. It is easy to check that in the resulting graph both $P \setminus \{d\}$ and $Q \setminus \{d\}$ still are simple $s$-$t$-paths (unless $s$ and $t$ were merged, which implies the trivial case $P = Q$) and that the circulation $\delta_{P\setminus\{d\}} - \delta_{Q\setminus\{d\}}$ in the new graph is represented by the same face potential $\phi$ (the faces are still the same after the contraction, but the edge is deleted in the dual). Thus, the construction performed in Lemma \ref{lem_meetdefinition} produces $D^{P \wedge Q} \setminus \{d\}$ instead of $D^{P \wedge Q}$ for the contracted graph. Finally, if $D^{P \wedge Q}$ contains a counterclockwise simple cycle $C$, then $D^{P \wedge Q} \setminus \{d\}$ contains the counterclockwise simple cycle $C \setminus \{d\}$ in the graph after the contraction (note that $C$ always contains at least two edges that cannot be contracted, as it contains darts of $P \setminus Q$ and $Q \setminus P$).\footnote{The simplicity of $P$, $Q$ is preserved beause $d \in P \cap Q$ implies that a cycle in $P \setminus \{d\}$ or $Q \setminus \{d\}$ that occurs after merging two vertices must have already been a cycle in $P$ or $Q$ before contracting the edge. Simplicity of $C$ is maintained as a dart in $P \cap Q$ connecting two vertices on the cycle can only be a dart of the cycle ($d$ is the only possible way the cycle can continue at $\tail(d)$).} Consequently, we can assume w.l.o.g. that $P \cap Q = \emptyset$. In particular, every  dart on a solid cycle is a $P$-dart or a $Q$-dart.

\begin{assumption}
	For the proofs of Lemma \ref{lem_cycleconstellation} and Lemma \ref{lem_nocounterclockwise}, we assume that $P \cap Q = \emptyset$.
\end{assumption}

\begin{lemma}\label{lem_cycleconstellation}
	Let $C$ be a solid counterclockwise cycle. Then there is a dart $d_{1} \in C$ with $\phi_{l}(d_{1}) = 1$ and a dart $d_{2} \in C$ with $\phi_{l}(d_{2}) \geq 2$ such that $\tail(d_{1})$ and $\head(d_{2})$ are on $P$ and  $C[\tail(d_{1}), \head(d_{2})] \circ \rev{P[\tail(d_{1}), \head(d_{2})]}$ is a simple counterclockwise cycle. 
\begin{proof}	
	$C$ must contain at least one solid $P$-dart $d_{P}$ and one solid $Q$-dart $d_{Q}$, as the paths are simple and none of them can make up the cycle on its own. W.l.o.g., we can assume $\head(d_{P}) = \tail(d_{Q})$ and apply the change of tracks lemma, which implies that $\phi_{l}(d_{Q}) = 1$ and gives us a dart $d^{\prime} \in P$ that leaves $\tail(d_{Q})$ to the left, i.e., towards the interior of $C$. As $P$ ends in $t$, which is neither in the interior nor on the cycle, $P[\tail(d^{\prime}), t]$ contains a dart on the exterior of $C$. Let $d^{\prime\prime}$ be the last dart of $P[\tail(d^{\prime}), t]$ that is in the interior of $C$ before the first exterior dart occurs. As $\head(d^{\prime\prime})$ is on the cycle, it must have an incoming cycle dart and an outgoing cycle dart. As $d^{\prime\prime} \in P$ is not a cylce dart, the incoming cycle dart is a solid $Q$-dart $d_{C}$. If the outgoing cyle dart is a $P$-dart, $P$ continues on the cycle after $d^{\prime\prime}$ until the cycle changes tracks again from a $P$-dart to a $Q$-dart. But then $P$ enters the interior of the cycle again by the change of tracks lemma. This is a contradiction to our choice of $d^{\prime\prime}$. So the outgoing cycle dart is also a solid $Q$-dart $d_{C}^{\prime}$. As there is no other interior $P$-dart at $\head(d^{\prime\prime})$ by choice of $d^{\prime\prime}$, the faces left of $d_{C}$ and $d_{C}^{\prime}$ are only separated by $d^{\prime\prime}$ with $\rightf(d^{\prime\prime}) = \leftf(d_{C})$ and $\leftf(d^{\prime\prime}) = \leftf(d_{C}^{\prime})$. Thus $\phi_{l}(d_{C}) = \phi_{l}(d_{C}^{\prime}) + 1 \geq 2$. (also cf. Figure \ref{fig_ccwcycle_proof} (c))
	
	We now choose $d_{2}$ to be the first dart on $C[\tail(d_{Q}), \head(d_{C})]$ with $\phi_{l}(d_{2}) \geq 2$ and $\head(d_{2})$ on $P[\tail(d_{Q}), t]$. We then choose $d_{1}$ to be the last dart on $C[\tail(d_{Q}), \head(d_{2})]$ with $\tail(d_{1})$ on $P[\tail(d_{Q}),\head(d_{2})]$. Clearly, $d_{1}$ and $d_{2}$ are $Q$-darts. If the predecessor $d_{1}^{\prime}$ of $d_{1}$ on $C$ is a $P$-dart, $\phi_{l}(d_{1}) = 1$ by the change of tracks lemma. Otherwise, $2 > \phi_{l}(d_{1}^{\prime}) > 0$ by choice of $d_{2}$, and there is a $P$-dart entering $\tail(d_{1})$ from the interior by choice of $d_{1}$ and construction of $P[\tail(d_{Q}), t]$, implying $\phi_{l}(d_{1}^{\prime}) \geq \phi_{l}(d_{1})$. Thus, $1 = \phi_{l}(d_{1}^{\prime}) \geq \phi_{l}(d_{1}) > 0$ in this case.
\end{proof}
\end{lemma}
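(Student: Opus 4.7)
The strategy is to localize a ``change of tracks'' from a solid $P$-dart to a solid $Q$-dart on $C$, and then trace $P$ through the interior of $C$ until it first escapes. Since $P$ and $Q$ are simple and, by the running assumption, disjoint, $C$ must contain darts from both paths; hence there is a vertex $v \in C$ at which a solid $P$-dart $d_P$ is immediately followed by a solid $Q$-dart $d_Q$. Lemma \ref{lem_changeoftracks} applied at $v$ delivers $\phi_{l}(d_Q) = 1$ together with a $P$-dart $d'$ leaving $v$ strictly to the left of $d_Q$, i.e., into the interior of $C$. A natural candidate for $d_1$ is $d_Q$ itself: its tail $v$ lies on $P$ and its left potential equals $1$.

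Next I would trace $P$ from $v$ onward toward $t$. Since $C$ is counterclockwise its interior is a bounded region, while $t$ lies on the infinite face; hence $P$ must eventually escape the interior. Let $d''$ be the last dart of $P$ that is interior to $C$ and whose successor on $P$ is exterior, and set $w := \head(d'')$, which is a vertex of $C$. Both cycle darts incident to $w$ must be $Q$-darts: otherwise, continuing along $C$ past $w$ would trigger another change of tracks from a $P$-dart to a $Q$-dart, and Lemma \ref{lem_changeoftracks} would send $P$ back into the interior, contradicting the maximality in the choice of $d''$. A local potential count at $w$ --- using that $d''$ is the only $P$-dart separating the two faces left of the $Q$-cycle darts at $w$, so the circulation value $+1$ on $d''$ is absorbed between them --- then shows that the incoming cycle $Q$-dart at $w$ has left potential $\geq 2$, producing a candidate for $d_2$ whose head lies on $P$.

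To finish, I would pick $d_2$ as the first dart along $C$ starting from $v$ with $\phi_{l}(d_2) \geq 2$ and head on the traced portion of $P$, and $d_1$ as the last dart on $C[v, \head(d_2)]$ whose tail lies on $P[v, \head(d_2)]$. By these extremal choices $C[\tail(d_1), \head(d_2)]$ and $P[\tail(d_1), \head(d_2)]$ meet only at their common endpoints, so the hybrid cycle is simple; since the $P$-segment lies in the interior of $C$, the hybrid cycle's interior is a subregion of the interior of $C$, and thus the hybrid cycle inherits the counterclockwise orientation. The most delicate step --- and the main obstacle --- is pinning $\phi_{l}(d_1)$ to exactly $1$: if the predecessor of $d_1$ on $C$ is a $P$-dart, Lemma \ref{lem_changeoftracks} gives this directly; otherwise the predecessor $d_1'$ is a solid $Q$-dart whose left potential is $\geq 1$ by Corollary \ref{cor_leftpotential} and $< 2$ by the minimality of $d_2$, hence $=1$, and the incoming interior $P$-dart at $\tail(d_1)$ (supplied by the traced portion of $P$) forces $\phi_{l}(d_1) \leq \phi_{l}(d_1') = 1$, which combined with $\phi_{l}(d_1) \geq 1$ yields the desired equality.
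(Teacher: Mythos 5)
Your proposal is correct and follows essentially the same route as the paper's proof: locate a change of tracks to obtain a $Q$-dart with left potential $1$ and a $P$-dart escaping into the interior, trace $P$ to its last interior dart to produce a cycle dart with left potential at least $2$, and then make the same extremal choices of $d_{2}$ and $d_{1}$ with the same case analysis on the predecessor of $d_{1}$. The only difference is cosmetic: you additionally spell out why the hybrid cycle is simple and counterclockwise, a point the paper leaves implicit.
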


We now apply a counting argument on the $P$-darts that enter or leave $C[\tail(d_{1}), \head(d_{2})]$ from the left to show that $\phi_{l}(d_{2})$ can in fact not be larger than $\phi_{l}(d_{1})$. This yields a contradiction, proving the non-existence of counterclockwise cycles in $D^{P \wedge Q}$.
\newline\newline
\textbf{Lemma \ref{lem_nocounterclockwise}.}\textit{
 There are no counterclockwise simple cycles in $D^{P \wedge Q}$.
}
\begin{proof}		
	Let $d_{1}$ and $d_{2}$ be as asserted by Lemma \ref{lem_cycleconstellation}. We will traverse $C$ from $v := \tail(d_{1})$ to $w := \head(d_{2})$ and show that $\phi_{l}(d_{1}) \geq \phi_{l}(d_{2})$.\footnote{It is actually more convenient to imagine we traverse the interior faces adjacent to the vertices on $C[v, w]$ from $\leftf(d_{1})$ to $\leftf(d_{2})$ in the dual. We then cross every dart that leaves $C[v,w]$ to the left from left to right, decreasing the potential by $1$, and every dart that enters $C[v,w]$ from the left from right to left, increasing the potential by $1$.} Note that $C[v,w]$ starts and ends with a $Q$-dart. Furthermore, whenever a change of tracks on $C[v,w]$ from $Q$ to $P$ and back to $Q$ occurs, the potential to the left must be equal to $1$ at the last $Q$-dart on the cycle before the $P$-darts and at the first $Q$-dart on the cycle after them. Moreover, these two changes correspond to exactly one $P$-dart entering $C[v,w]$ from the left and  exactly one $P$-dart leaving it to the left, with no $P$-darts entering or leaving $C[v,w]$ inbetween. So the total increase of the potential of the face left of us on our way from $d_{1}$ to $d_{2}$ on the cycle is the number of $P$-darts that leave $C[v,w]$ to the left minus the number of $P$-darts that enter $C[v,w]$ from left. However, the number of $P$-darts entering $C[v,w]$ from the left must greater or equal to the number of $P$-darts leaving it to the left because $t$ is on the exterior of $C[v,w] \circ \rev{P[v,w]}$ and $P$ cannot cross $P[v,w]$, such that any part of $P$ that enters the interior from $C[v, w]$ must also leave it later on to $C[v, w]$.\footnote{The vertex $t$ is on the exterior as it is adjacent to the infinite face and can also not be on the cycle as it has no outoing darts.} This implies $1 = \phi_{l}(d_{1}) \geq \phi_{l}(d_{2}) = 2$, a contradiction.
\end{proof}

\subsubsection*{Submodularity of the path lattice}

In the proof of Theorem \ref{thm_leftmostpathlattice} we omitted the submodularity of the lattice. Thus, we still have to show that $(S \wedge T) \cup (S \vee T) \subset S \cup T$ and $(S \wedge T) \cap (S \vee T) \subset S \cap T$ for all $S, T \in \mathcal{P}$. The first inclusion directly follows from the construction of meet and join and Lemma \ref{lem_meetdefinition} as $S \wedge T \subseteq D^{S \wedge T} \subseteq P \cup Q$ and $S \vee T \subseteq D^{S \vee T} \subseteq P \cup Q$. For the second inclusion, let $d \in (P \wedge Q) \cap (P \vee Q) \subseteq D^{P \wedge Q} \cap D^{P \vee Q}$ and assume by contradiction $d \in P \setminus Q$ or $d \in Q \setminus P$. In the first case, Corollary \ref{cor_leftpotential} implies $\delta_{l}(d) < 0$ as $d \in D^{P \wedge Q} \setminus Q$. An analogous version of the corollary for $D^{P \vee Q}$ implies $\delta_{r}(d) > 0$ for $d \in D^{P \vee Q} \setminus Q$. However, $1 = (\delta_{P} - \delta_{Q})(d) = \delta_{r}(d) - \delta_{l}(d) \geq 2$, which is a contradiction. The case $d \in Q \setminus P$ leads to the same contradiction. Thus $d \in P \cap Q$ and the lattice is submodular.

\newpage

\subsection*{Appendix to Section \ref{sec_characterization}: The proof of Lemma \ref{lem_counterexample}}

 \begin{figure}[t]
  \begin{center}
  \begin{tikzpicture}[scale=0.9]
 
	\node[labeledNode] (s) at (-2, 0) {$s$};
	\node[normalNode] (1) at (0, 2) {}
		edge[normalEdge] (s);
	\node[normalNode] (2) at (0, 0) {}
		edge[normalEdge] (s)
		edge[normalEdge] (1);
	\node[normalNode] (3) at (0, -2) {}
		edge[normalEdge] (s)
		edge[normalEdge] (2);
	\node[labeledNode] (t) at (2, 0) {$t$}
		edge[normalEdge] (1)
		edge[normalEdge] (2)
		edge[normalEdge] (3);
	\draw[normalEdge] (0, -2) arc (300:64:2.3);
		
	\path (8, 3) node[draw] (P1) {$P_{1}$};
	\path (P1) ++(-0.75, 0.5) node[smallLabeledNode] (sP1) {} 
		++(0.75,0.75) node[smallNode] (v1P1) {}
		edge[<-, normalEdge] (sP1)
		++(0.75,-0.75) node[smallLabeledNode] (tP1) {}
		edge[<-, normalEdge] (v1P1);
		
	\path (P1) ++(-3.5, -1) node[draw] (P2) {$P_{2}$} edge[->, normalEdge, dashed, bend left=35] node[above] {(3)} (P1);
	\path (P2) ++(0.6, 0) node[smallLabeledNode] (sP2) {} 
		++(0.75, 0) node[smallNode] (v2P2) {}
		edge[<-, normalEdge] (sP2)
		++(0,0.75) node[smallNode] (v1P2) {}
		edge[<-, normalEdge] (v2P2)
		++(0.75,-0.75) node[smallLabeledNode] (tP2) {}
		edge[<-, normalEdge] (v1P2);
		
	\path (P1) ++(3.5, -1) node[draw] (P3) {$P_{3}$} edge [->, normalEdge, dashed, bend right=35] node[above] {(4)} (P1);
	\path (P3) ++(-2.1, 0) node[smallLabeledNode] (sP3) {} 
		++(0.75, 0.75) node[smallNode] (v1P3) {}
		edge[<-, normalEdge] (sP3)
		++(0, -0.75) node[smallNode] (v2P3) {}
		edge[<-, normalEdge] (v1P3)
		++(0.75, 0) node[smallLabeledNode] (tP3) {}
		edge[<-, normalEdge] (v2P3);
		
	\path (P2) ++(-1.2, -2) node[draw] (P4) {$P_{4}$};
	\path (P4) ++(1.1, 0) node[smallLabeledNode] (sP4) {} 
		++(0.75, 0) node[smallNode] (v2P4) {}
		edge[<-, normalEdge] (sP4)
		++(0, -0.75) node[smallNode] (v3P4) {}
		edge[<-, normalEdge] (v2P4)
		++(0, 1.5) node[smallNode] (v1P4) {}
		++(0.75,-0.75) node[smallLabeledNode] (tP4) {}
		edge[<-, normalEdge] (v1P4);
	\draw[->, normalEdge] (v3P4) arc (300:65:0.85);
	
	\path (P3) ++(0.8, -2) node[draw] (P5) {$P_{5}$};
	\path (P5) ++(-2.2, 0) node[smallLabeledNode] (sP5) {} 
		++(0.75, -0.75) node[smallNode] (v3P5) {}
		edge[<-, normalEdge] (sP5)
		++(0, 1.5) node[smallNode] (v1P5) {}
		++(0, -0.7) node[smallNode] (v2P5) {}
		edge[<-, normalEdge] (v1P5)
		++(0.75, 0) node[smallLabeledNode] (tP5) {}
		edge[<-, normalEdge] (v2P5);
	\draw[->, normalEdge] (v3P5) arc (300:65:0.85);
	
	\path (P1) ++(0, -4.6) node[draw] (P6) {$P_{6}$} edge[->, normalEdge, dashed, bend left=35] node[above right] {(5)} (P4) edge[->, normalEdge, dashed, bend right=35] node[above left] {(6)} (P5);
	\path (P6) ++(-0.75, +1.6) node[smallLabeledNode] (sP6) {}
		++(0.75, -0.75) node[smallNode] (v3P6) {}
		edge[<-, normalEdge] (sP6)
		++(0, 1.5) node[smallNode] (v1P6) {}
		++(0.75,-0.75) node[smallLabeledNode] (tP6) {}
		edge[<-, normalEdge] (v1P6);
	\draw[->, normalEdge] (v3P6) arc (300:65:0.85);
	
	\path (P4) ++(0.8, -2.5) node[draw] (P7) {$P_{7}$};
	\path (P7) ++(0.6, 0) node[smallLabeledNode] (sP7) {} 
		++(0.75, 0) node[smallNode] (v2P7) {}
		edge[<-, normalEdge] (sP7)
		++(0,-0.75) node[smallNode] (v3P7) {}
		edge[<-, normalEdge] (v2P7)
		++(0.75, 0.75) node[smallLabeledNode] (tP7) {}
		edge[<-, normalEdge] (v3P7);
		
	\path (P5) ++(-0.4, -2.5) node[draw] (P8) {$P_{8}$};
	\path (P8) ++(-2.1, 0) node[smallLabeledNode] (sP8) {} 
		++(0.75, -0.75) node[smallNode] (v3P8) {}
		edge[<-, normalEdge] (sP8)
		++(0, 0.75) node[smallNode] (v2P8) {}
		edge[<-, normalEdge] (v3P8)
		++(0.75, 0) node[smallLabeledNode] (tP8) {}
		edge[<-, normalEdge] (v2P8);
		
		\draw[->, normalEdge, dashed] (P4) -- node[left] {(1)} (P2);
		\draw[->, normalEdge, dashed] (P7) -- node[left] {(2)} (P4);
		\draw[->, normalEdge, dashed] (P3) -- node[right] {(7)} (P5);
		\draw[->, normalEdge, dashed] (P5) -- node[right] {(8)} (P8);
\end{tikzpicture}
   \caption[test]{The graph $K^{s-t}_{5}$ and eight of its $s$-$t$-paths, which suffice for showing that there is no partial order that induces a submodular and consecutive lattice on the set of paths. The numbered dashed edges correspond to the steps of the proof of Lemma \ref{lem_counterexample}.}
   \label{fig_k5}
  \end{center}
 \end{figure}
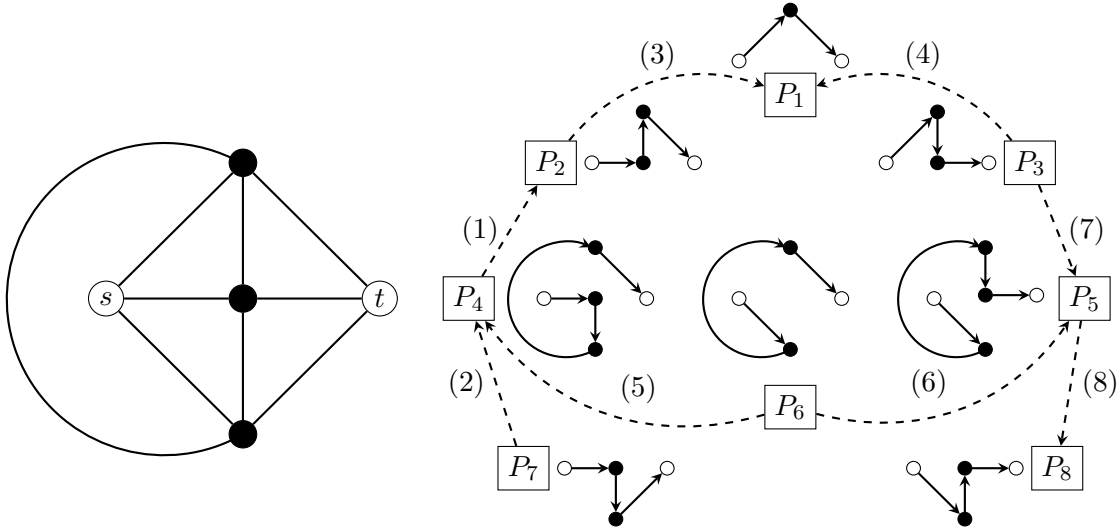

In Section \ref{sec_general}, we stated that the set of $s$-$t$-paths of $K^{s-t}_{3,3}$ or $K^{s-t}_{5}$ cannot be equipped with a consecutive and submodular lattice structure (Lemma \ref{lem_counterexample}). However, we only proved this property for the graph $K^{s-t}_{3,3}$. We now complete the proof for $K^{s-t}_{5}$.
\newline\newline
\textbf{Lemma \ref{lem_counterexample}.}
\textit{
  The set of $s$-$t$-paths in $K^{s-t}_{3,3}$ or $K^{s-t}_{5}$ (or a subdivision of these graphs) cannot be equipped with a partial order $\preceq$, such that $(\mathcal{P}, \preceq)$ is a consecutive and submodular lattice.
}
\begin{proof}
 Consider the eight paths $P_{1}, \ldots, P_{8}$ of $K^{s-t}_{5}$ depicted in Figure \ref{fig_k5}. By Lemma \ref{lem_comparability}, we deduce that $P_{1}$, $P_{2}$, $P_{4}$, and $P_{7}$ are pairwise comparable. W.l.o.g.\ we let $P_{4} \prec P_{2}$ (1). This already implies $P_{7} \prec P_{4}$ (2) and then $P_{2} \prec P_{1}$ (3) by consecutivity. As $P_{3}$ is comparable to $P_{1}$, we obtain $P_{3} \prec P_{1}$ (4) by consecutivity. Furthermore, the pairwise comparability of $P_{2}$, $P_{4}$, and $P_{6}$ implies $P_{6} \prec P_{4}$ (5) by consecutivity. As $P_{6}$ and $P_{5}$ are comparable, we obtain $P_{6} \prec P_{5}$ (6). Also $P_{3}$ and $P_{5}$ are comparable, and as $P_{6} \prec P_{5} \prec P_{3} \prec P_{1}$ contradicts consecutivity, $P_{3} \prec P_{5}$ (7) must hold. This immediately implies $P_{5} \prec P_{8}$ (8) as $P_{3}$, $P_{5}$ and $P_{8}$ must be pairwise comparable. Finally $P_{1}$ and $P_{8}$ must be comparable, but both $P_{4} \prec P_{1} \prec P_{8}$ and $P_{6} \prec P_{8} \prec P_{1}$ yield a contradiction to consecutivity.
\end{proof}
 
 \bibliography{references}{} 
\end{document}